\long\def\twocolumn[#1]{#1}
\if@twocolumn\PassOptionsToPackage{switch}{lineno}\else\fi\makeatother
\let\save@ps@pprintTitle\ps@pprintTitle
\def\ps@pprintTitle{\save@ps@pprintTitle\gdef\@oddfoot{\footnotesize\itshape \null\hfill\today}}
\def\hlinewd#1{%
	\noalign{\ifnum0=`}\fi\hrule \@height #1%
	\futurelet\reserved@a\@xhline}
\else\usepackage{stmaryrd}\fi
\def\mcWidth#1{\csname TY@F#1\endcsname+\tabcolsep}
\def\cAlignHack{\rightskip\@flushglue\leftskip\@flushglue\parindent\z@\parfillskip\z@skip}
\def\rAlignHack{\rightskip\z@skip\leftskip\@flushglue \parindent\z@\parfillskip\z@skip}
\if@twocolumn\usepackage{dblfloatfix}\fi 
	\def\eqalign#1{\null\vcenter{\def\\{\cr}\openup\jot\m@th
			\ialign{\strut$\displaystyle{##}$\hfil&$\displaystyle{{}##}$\hfil
				\crcr#1\crcr}}\,}
\renewcommand\efloat@iwrite[1]{\immediate\expandafter\protected@write\csname efloat@post#1\endcsname{}}}{}%
\let\lt=<
\let\gt=>
\def\processVert{\ifmmode|\else\textbar\fi}
	\def\subparagraph{\@startsection{paragraph}{5}{2\parindent}{0ex plus 0.1ex minus 0.1ex}%
		{0ex}{\normalfont\small\itshape}}%
\newcommand\role[1]{\unskip}
\newcommand\aucollab[1]{\unskip}
\def\checkGraphicsWidth{\ifdim\Gin@nat@width>\linewidth
	\tsGraphicsScaleX\linewidth\else\Gin@nat@width\fi}
\def\checkGraphicsHeight{\ifdim\Gin@nat@height>.9\textheight
	\tsGraphicsScaleY\textheight\else\Gin@nat@height\fi}
\def\fixFloatSize#1{}%\@ifundefined{processdelayedfloats}{\setbox0=\hbox{\includegraphics{#1}}\ifnum\wd0<\columnwidth\relax\renewenvironment{figure*}{\begin{figure}}{\end{figure}}\fi}{}}
\let\ts@includegraphics\includegraphics
\def\inlinegraphic[#1]#2{{\edef\@tempa{#1}\edef\baseline@shift{\ifx\@tempa\@empty0\else#1\fi}\edef\tempZ{\the\numexpr(\numexpr(\baseline@shift*\f@size/100))}\protect\raisebox{\tempZ pt}{\ts@includegraphics{#2}}}}
\def\URL#1#2{\@ifundefined{href}{#2}{\href{#1}{#2}}}
\def\UrlOrds{\do\*\do\-\do\~\do\'\do\"\do\-}%
\g@addto@macro{\UrlBreaks}{\UrlOrds}
\newenvironment{title-quote}
{\list{}{\fontsize{10pt}{12pt}\selectfont\leftmargin.5in\itshape\rightmargin\leftmargin}%
	\item\relax}
{\endlist}
\theoremstyle{definition}
\newtheorem{thm}{Teorema}[section]
\newtheorem{dfn}[thm]{Definition}
\newtheorem{pps}[thm]{Proposition}
\newcommand{\das}{\stackrel{d}{=}}
\newcommand{\distras}[1]{%
	\savebox{\mybox}{\hbox{\kern3pt$\scriptstyle#1$\kern3pt}}%
	\savebox{\mysim}{\hbox{$\sim$}}%
	\mathbin{\overset{#1}{\kern\z@\resizebox{\wd\mybox}{\ht\mysim}{$\sim$}}}%
}
\newcommand{\nub}{\bm{\nu}}
\newcommand{\betab}{\bm{\beta}}
\newcommand{\Xb}{\bm{X}}
\journal{Journal of Statistical Planning and Inference}
\begin{document}
\begin{frontmatter}
	
\title{Regression models for binary data with scale mixtures of centered skew-normal link functions}

\author[]{João Victor B. de Freitas}
\author[]{Caio L. N. Azevedo}

\begin{abstract}
For the binary regression, the use of symmetrical link functions are not appropriate when we have evidence that the probability of success increases at a different rate than decreases. In these cases, the use of link functions based on the cumulative distribution function of a skewed and heavy tailed distribution can be useful. The most popular choice is some scale mixtures of skew-normal distribution. This family of distributions can have some identifiability problems, caused by the so-called direct parameterization. Also, in the binary modeling with skewed link functions, we can have another identifiability problem caused by the presence of the intercept and the skewness parameter. To circumvent these issues, in this work we proposed link functions based on the scale mixtures of skew-normal distributions under the centered parameterization. Furthermore, we proposed to fix the sign of the skewness parameter, which is a new perspective in the literature to deal with the identifiability problem in skewed link functions. Bayesian inference using MCMC algorithms and residual analysis are developed. Simulation studies are performed to evaluate the performance of the model. Also, the methodology is applied in a heart disease data. 
\end{abstract}
\begin{keyword} 
   Binary regression \sep scale mixtures of skew-normal distributions \sep centered parameterization \sep Bayesian inference \sep skewed link functions
\end{keyword}
	
\end{frontmatter}

\section{Introduction}
Binary regression models are adequate to analyze data when the response variable assumes only two values. In these models the probability of success of a binary response is estimated based on one or more covariates through the specification of a link function. According to \cite{Chen99}, the degree of skewness of the link function can be measured by the rate at which the probability of success of a response variable approaches to 0 or 1. A link function is symmetric if the approximation rate of the probability of success for 0 is the same as the approximation rate for 1~\citep{Chen99}, such as the probit and logit link functions, in the same sense, a link function is positively skewed if the approximation rate of the probability of success for 1 is faster than the approximation rate for 0, and negatively skewed, otherwise. \cite{Czado} showed, through a simulation study, that the link misspecification, in terms of skewness, can lead to bias in the estimates of the regression coefficients. We can solve this by the use of skewed link functions, that can be obtained, for example, through the cumulative distribution function (CDF) of skewed distributions.

Many regression models with skewed link functions have been proposed in the literature. \cite{Stukel} proposed the generalized logistic models and \cite{Guerrero} proposed to use the Box-Cox transformation~\citep{box1964analysis} to the odds ratio to generalize the logistic regression model. Another option to deal with skewness in link functions is to consider the CDF of a skewed distribution. The most popular example of this method is the complementary log-log link function, which is constructed from the CDF of the Gumbel distribution. \cite{Chen99} proposed a skewed probit link, considering a class of mixture of normal distributions. \cite{Bazan2010} presented a unified approach for two skew probit links. In \cite{Bazan2014}, it was introduced two new skewed link functions, one based on the CDF of the power-normal distribution and another based on the CDF of the reciprocal power-normal distribution.

Since probit and logistic regression estimates are not robust in the presence of outliers, \cite{Liu2004} proposed a new binary model, named robit regression, in which the normal distribution in probit regression is replaced by a t-distribution with known or unknown degrees of freedom. Both the logistic model and the probit model can be approximated by the robit regression, as showed in \cite{Liu2004}. Instead using the t-distribution, \cite{Kim} introduced a class of skewed generalized t-link models, that accommodate heavy tails and skewness in link functions. Many of the proposals involving link functions based on CDF of skewed distributions use the approach of \cite{Albert}, however it is known that this approach can cause identifiability problems between the intercept and the skewness parameter~\citep{Chen99,Kim}.

Said that, the main contributions of this paper are:
\begin{enumerate}
    \item We developed a wide class of link functions for binary regression models that accommodates skewed and heavy tailed link functions, and includes the probit, skew probit, skew t, skew slash and skew contaminated normal models. This class is based on the scale mixtures of skew-normal (SMSN) family of distributions (see \cite{Clecio_VH_2011}) considering a centered parameterization of the skew-normal distribution, to avoid identifiability problems caused by the direct parameterization, namely scale mixtures of centered skew-normal (SMCSN) distributions.
    \item We propose to fix the sign of the skewness parameter to avoid identifiability problem with the intercept, and using simulations, we showed that this approach is efficient. This is a new perspective to deal with this identifiability problem, since the usual approaches are to exclude the intercept~\citep{Chen99} or to do a reparameterization~\citep{Kim}.
    \item We also discuss Bayesian inference and residual analysis for the proposed model.
    \item Simulation studies were performed to assess the behavior of
the MCMC algorithms to estimate the parameters.
    \item Analysis of a heart disease data showing that our
approach outperforms the probit and centered skew-normal link functions.
\end{enumerate}

\section{Scale mixtures of skew-normal distribution under the centered parametrization}

\subsection{The centered skew-normal distribution}
The scale mixtures of skew-normal distributions under the direct parametrization is constructed based on the skew-normal distribution, denoted by $Y \sim SN(\alpha,\beta ^2,\lambda)$. This distribution was originally introduced by \cite{Azzalini_1985}, whose probability density function (PDF) is given by $f(y|\alpha,\beta,\lambda)= 2\beta^{-1} \phi\left(\frac{y-\alpha}{\beta}\right)
\Phi\left(\lambda\left(\frac{y-\alpha}{\beta}\right)\right) I_{(-\infty,\infty)}(y)$, with location parameter  $\alpha \in \mathbb{R}$, scale parameter  $\beta \in \mathbb{R}^{+}$ and skewness parameter $\lambda \in \mathbb{R}$, where $\phi(.)$ and $\Phi(.)$ denote the standard normal PDF and CDF, respectively. 
As noticed by \cite{ArellanoAzzalini2008}, the direct parameterization of the skew-normal distribution has some identifiability problems, if $\lambda\approx 0$, the log-likelihood presents a non-quadratic shape. Even under the Bayesian paradigm, this fact can lead to some problems. 
\cite{Pewsey2000} addressed various issues related to direct parameterization and explained why it should not be used for estimation procedures. \cite{Azzalini_1985} noticed that when $\lambda\approx 0$ the Fisher Information is singular. More details of these discussions can be found in \cite{genton2004skew}. To circumvent this issue, \cite{Azzalini_1985} proposed an alternative parameterization, namely centered parameterization, which is defined by 
\begin{align}
Y = \mu + \sigma Z_0
\label{centred_sn}
\end{align}
where $Z_0 = \frac{Z - \mu_z}{\sigma_z}$ with $Z \sim SN(0,1,\lambda)$, $\mu_z = b\delta$ and $\sigma_z = \sqrt{1-b^2\delta^2}$. The centered parameterization is formed by the centered parameters $\mu \in \mathbb{R}$,  $ \sigma \in \mathbb{R}^{+} $ and $\delta \in (-1, 1)$, whose explicit expression are $\mu = E(Y) = \alpha  +\beta \mu_z$, $\sigma^2 = Var(Y) = \beta^2(1-\mu_z^2 )$ and $\delta = \lambda/\sqrt{1+\lambda^2}$. The centered parameterization of the skew-normal distribution, or centered skew-normal (CSN) distribution, will be denoted by $Y \sim CSN(\mu,\sigma^2,\delta)$. The density of  (\ref{centred_sn}), after some algebra, is given by
\begin{align}
f(y|\mu,\sigma^2,\delta) = 2\omega^{-1}\phi(\omega^{-1}(y-\xi))\Phi\left( \lambda \left( \frac{y-\xi}{\omega}\right) \right)
\label{density_centred_sn}
\end{align}
where $s = \left(\frac{2}{4-\pi}\right)^{1/3}$, $\xi = \mu - \sigma\gamma^{1/3}s$,
$\omega = \sigma \sqrt{1+s^2\gamma^{2/3}}$, $\lambda = \delta/\sqrt{1-\delta^2}$ and $\gamma = \frac{4 - \pi}{2} \frac{(b \delta)^3}{(1 - b^2 \delta^2)^{3/2}}$ denotes the Pearson's skewness coefficient.

Considering $Z\sim SN(0,1,\lambda)$, \cite{Henze1986} introduced a useful stochastic representation of this distribution, which is given by
\begin{align}
Y \das \delta H + \sqrt{1-\delta^2} T,
\label{stochastic_sn}
\end{align} 
where $\das$ means ``distributed as'', $H \sim HN(0,1) \bot T \sim N(0,1) $ and $HN(.)$ denotes the half-normal distribution.
Therefore, using (\ref{stochastic_sn}) and the CSN distribution as described in (\ref{centred_sn}), we have that the stochastic representation of the CSN, $Y \sim CSN(\mu,\sigma^2,\delta)$, is $Y \das \xi + \omega (\delta H + \sqrt{1-\delta^2} T)$, where $\xi$ and $\omega$ are defined as above.

\subsection{Scale mixture of centered skew-normal distributions}
Following the hierarchical representation of the scale mixture of skew-normal distribution under the direct parametrization described in \cite{Clecio_VH_2011}, we have the following definition.

\begin{dfn}
	A random variable Y follows a scale mixture of centered skew-normal distribution, or SMCSN distribution, if Y can be stochastically represented by
	\begin{align}
	Y \das \mu + k(U)^{1/2}Z
	\label{stochastic_smsn_cp},
	\end{align} 
	where $Z \sim CSN(0,\sigma^2,\delta)$ and U is a positive random variable with CDF  $G(.| \bm{\nu})$. 
	\label{def_smsn_cp}
\end{dfn} 
We use the notation $Y \sim SMCSN(\mu,\sigma^2,\delta,G,\nub)$ for a random variable represented as in Definition \ref{def_smsn_cp}.
From Definition \ref{def_smsn_cp}, it follows that $E(Y)=\mu$, since $E(Z)=0$, and $Var(Y)= \sigma^2 E(k(U))$. It also can be noticed that when $\delta=0$ we get the corresponding scale mixtures of normal distribution family, introduced by \cite{Andrews_Mallows}, since $Z \sim N(0,\sigma^2)$.
For this work, we will restrict this family considering $k(u)=\frac{1}{u}$. Under this restriction, we have the following examples of SMCSN distributions:
\begin{itemize}
	\item \textbf{Centered skew-t distribution:} this distribution is obtained considering $U \sim \mbox{gamma}(\nu/2,\nu/2)$, denoted by $CST(\mu,\sigma^2,\delta,\nu)$, where $\mu$ denotes the mean, $\delta$ the skewness parameter, $\nu$ the degree of freedom and $\sigma^2$ is related to the variance of Y through $Var(Y) = \sigma^2 \frac{\nu}{\nu -2} $, since $E(U^{-1}) = \frac{\nu}{\nu -2}$;
	\item \textbf{Centered skew-slash distribution:} this distribution is obtained considering $U \sim beta(\nu,1)$, denoted by $CSS(\mu,\sigma^2,\delta,\nu)$, where $\mu$ is the mean, $\delta$ the skewness parameter, $\nu$ the degree of freedom and $\sigma^2$ is related to the variance of Y through $Var(Y) = \sigma^2 \frac{\nu}{\nu -1}$, since $E(U^{-1}) = \frac{\nu}{\nu -1}$;
	\item \textbf{Centered skew contaminated normal distribution:} this distribution is obtained considering U a discrete random variable assuming only two values, with the following probability function $h(u|\bm{\nu}) = \nu_1 I(u=\nu_2) + (1-\nu_1)I(u=1)$ and $E(U^{-1}) =\frac{\nu_1 + \nu_2(1-\nu_1)}{\nu_2} $.
	This distribution is denoted by $CSCN(\mu,\sigma^2,\delta,\bm{\nu})$, where $\mu$ is the mean, $\delta$ is the skewness parameter and according to \cite{Aldo} the parameters $\nu_1$ and $\nu_2$ can be interpreted as the proportion of outliers and a scale factor,
	respectively. For this distribution, the variance of Y is equal to $\sigma^2 \frac{\nu_1 + \nu_2(1-\nu_1)}{\nu_2} $;
	\item \textbf{Centered skew-normal distribution:} this distribution is obtained considering $P(U=1)=1$;	
	\item \textbf{Normal distribution} this distribution is obtained considering $P(U=1)=1$ and $\delta=0$.
\end{itemize}

Using the stochastic representation in (\ref{stochastic_smsn_cp}), the CDF of this family can be written as
\begin{align}
F(y|\mu,\sigma^2,\delta,\nub) = \int_{-\infty}^y\left[\int_{0}^{\infty} f(x|\mu,\sigma^2 k(u),\delta) dG(u |\bm{\nu})\right]dx,
\label{cdf_smsn}
\end{align} 
where $f(x|\mu,\sigma^2 k(u),\delta)$ is the PDF of the CSN distribution as in (\ref{density_centred_sn}). For a SMCSN distribution, a hierarchical representation is useful to simplify the Bayesian estimation process. Based on the Definition \ref{def_smsn_cp}, this representation is given by
\begin{align}\label{smcsn_sto1}
\begin{split}
Y|U=u  &\sim CSN(\mu,\sigma^2 k(u), \delta),\\
U &\sim G(.| \bm{\nu}).
\end{split}
\end{align}

\section{Binary regression model} \label{s3: regbin}
Let $\bm{X} = (1,\bm{X}_1,\bm{X}_2,\dots,\bm{X}_{p-1})^\top$ be a $p\times n$ known design matrix of fixed covariates, $\bm{Y}=(Y_1,\dots,Y_n)^\top$ be a $n \times 1$ vector of dichotomous response variables, such that $y_i=1$ with probability $p_i$ and $y_i=0$ with probability $1-p_i$, and $\bm{\beta}= (\beta_0,\beta_1,\dots,\beta_{p-1})^\top$ be a $p \times 1$ vector of regression coefficients.
Consider the binary regression model assuming that $p_i = F(\eta_i) = F({X_i}^\top\bm{\beta}), i=1,\dots,n$, where $\eta_i ={X_i}^\top\bm{\beta} $, $F(\cdot)$ denotes a CDF and is a link function that represents the relationship between the probability of success and the covariates. In this paper, we assume that $p_i = F({X_i}^\top\bm{\beta}|\delta,\nub), i=1,\dots,n $,
where $F(\cdot|\delta,\nub)$ is the CDF of the SMCSN distribution with $\sigma^2=1$, $\delta$ is the skewness parameter and $\nub$ are shape parameters. 

The use of this distribution class in the binary regression model allows us a great flexibility in the choice of the link function, since this class includes heavy tailed, symmetric and skewed distributions. From Figure \ref{fig:link_stnssl}, we can see the effect of heavy tails on the CDFs by observing that the probability of success of the SMCSN CDFs grows slowly or fastly when compared to the CSN CDF. From these figures, we can also see that when $\delta = -0.95$ ($\delta = 0.95$) the probability $p_i$ approaches to 1 (0) at a faster rate than it approaches to 0 (1). When $\delta=0$ the probability of success approaches to 1 or 0 at the same rate. These figures suggest that the use of heavy tails distributions is appropriate in the cases where extreme values of the linear predictor are expected. In addition, heavy tails links help to control the rate of approximation of $p_i$ to 0 and 1, providing more flexibility in the modeling of the influence of the covariates in the response variable.

\begin{figure}[h!]
	\captionsetup[subfigure]{aboveskip=-3pt,belowskip=-3pt}
	\begin{subfigure}{\linewidth}
		\begin{subfigure}{\linewidth/2 - .5em}
			\includegraphics[width=\linewidth]{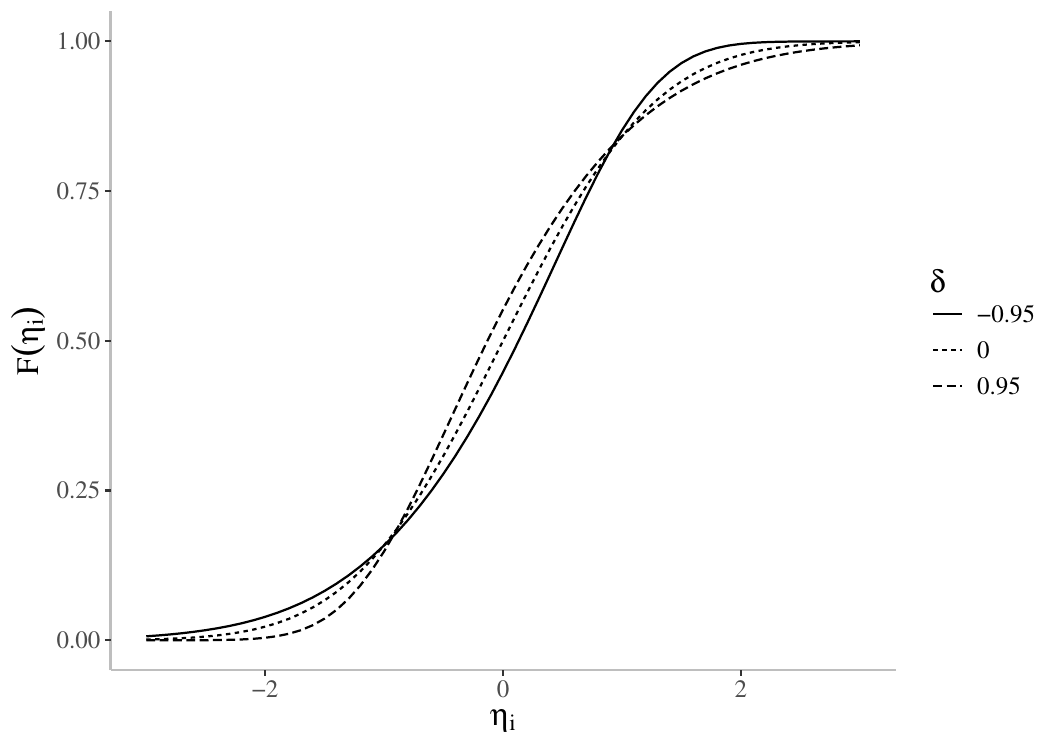}
			\caption{}
			\label{fig:link_csn}
		\end{subfigure}\hfill
		\begin{subfigure}{\linewidth/2 - .5em}
			\includegraphics[width=\linewidth]{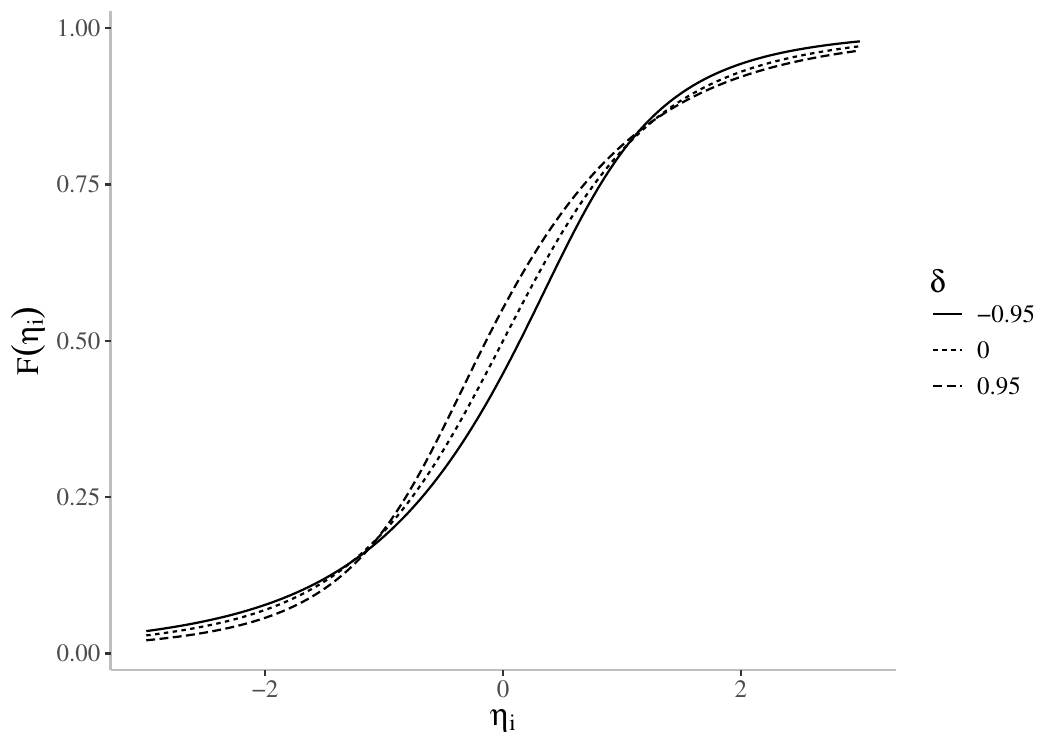}
			\caption{}
			\label{fig:link_cst}
		\end{subfigure}\hfill
		\begin{subfigure}{\linewidth/2 - .5em}
			\includegraphics[width=\linewidth]{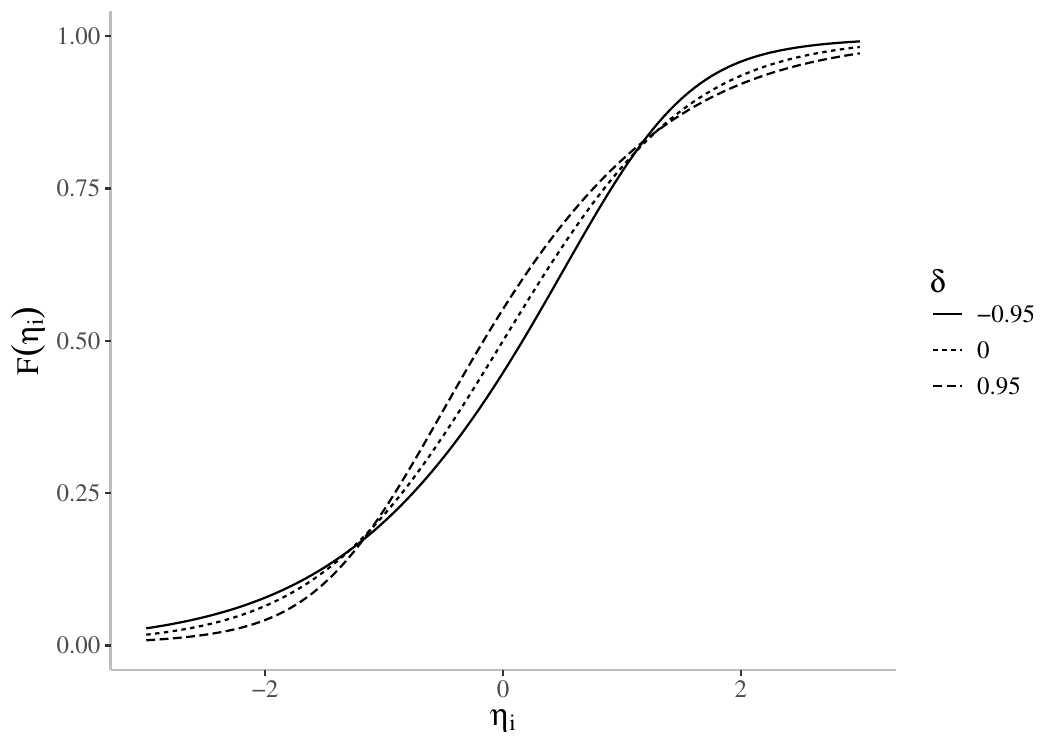}
			\caption{}
			\label{fig:link_css}
		\end{subfigure}
		\begin{subfigure}{\linewidth/2 - .5em}
			\includegraphics[width=\linewidth]{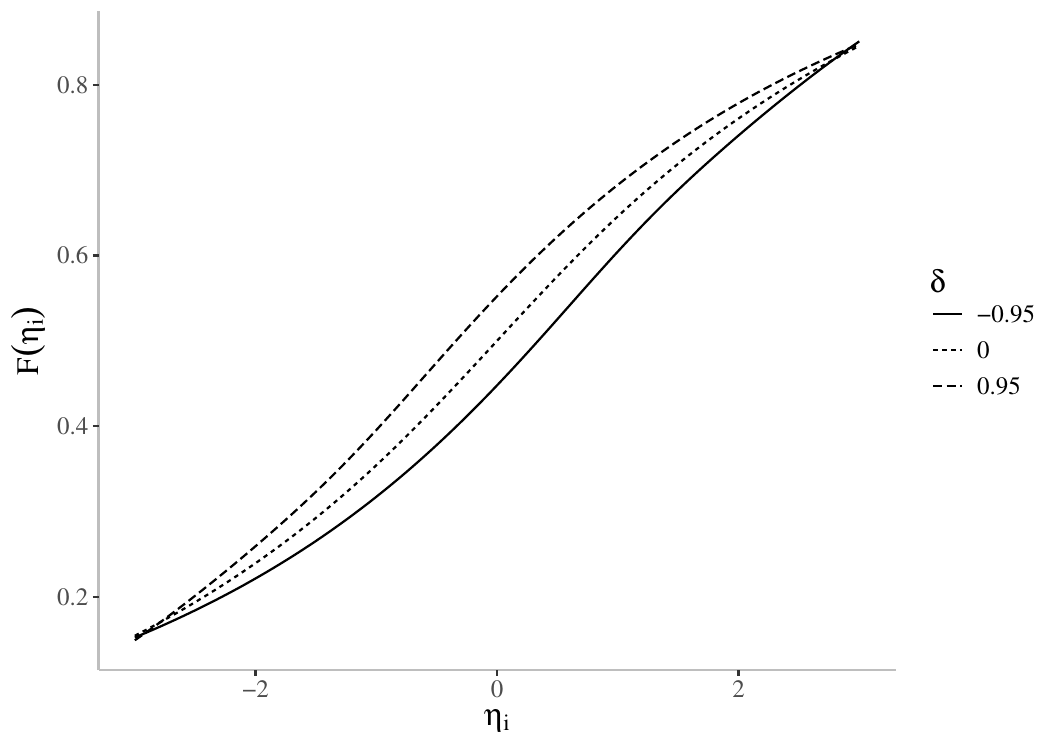}
			\caption{}
			\label{fig:link_cscn}
		\end{subfigure}
	\end{subfigure}
	\caption{Probability of success as a function of $\eta_i$ for various $\bm{\nu}$ for the CSN distribution (\subref{fig:link_csn}), CST distribution with $\nu=3$ (\subref{fig:link_cst}), CSS distribution with $\nu=2$(\subref{fig:link_css}) and CSCN distribution with $\bm{\nu} = (0.9,0.1)^\top$.}
	\label{fig:link_stnssl}
\end{figure}

To perform Bayesian inference, an approach based on data augmentation, as considered in \cite{Chib93} and \cite{Bazan2010}, will be used. The main advantage of this approach is the introduction of a hierarchical representation which simplifies the Bayesian estimation process. Said that, consider the following proposition.

\begin{pps}\label{prop3}
	The binary model $Y_i \sim Ber(p_i)$ with $p_i = F({X_i}^\top\bm{\beta}|\delta,\nub) $ is equivalent to consider
	\begin{align}\label{lat_bin_zf}
	Y_i = I(Z_i >0) = 
	\begin{cases}
	1 & \text{if $Z_i > 0$} \\
	0 & \text{if $Z_i\le 0$}
	\end{cases}, \quad i=1,\dots,n,
	\end{align} 
	with 
	\begin{align}\label{ziei_binf}
	\begin{split}
	Z_i &= {X_i}^\top\bm{\beta} - \Delta{U_i}^{-1/2}(H_i-b) + \sqrt{\tau}{U_i}^{-1/2}T_i,
	\end{split}   
	\end{align} 
 where $T_i \sim N(0,1)$, $H_i \sim HN(0,1)$, $U_i \sim G(.| \bm{\nu})$, $\Delta = \frac{\delta}{\sqrt{1-b^2\delta^2}}$ and $\tau = \frac{1-\delta^2}{1-b^2\delta^2}$.
 \end{pps}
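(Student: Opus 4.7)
The plan is to verify that the latent-variable formulation reproduces the correct Bernoulli success probability, namely $P(Z_i > 0) = F(X_i^\top \bm{\beta}\mid \delta,\nub) = p_i$. Once this is established, the equivalence $Y_i = I(Z_i > 0) \sim \mathrm{Ber}(p_i)$ follows immediately from the definition in (\ref{lat_bin_zf}).

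Starting from (\ref{ziei_binf}) and rearranging, the event $\{Z_i > 0\}$ is equivalent to
\[
U_i^{-1/2}\bigl[\Delta(H_i - b) - \sqrt{\tau}\, T_i\bigr] \;<\; X_i^\top \bm{\beta}.
\]
Since $T_i \sim N(0,1)$ is independent of $(H_i, U_i)$ and symmetric about zero, $-T_i \das T_i$, so the left-hand side has the same distribution as $W_i := U_i^{-1/2}\bigl[\Delta(H_i - b) + \sqrt{\tau}\, T_i\bigr]$.

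It remains to identify $W_i$ as an $\mathrm{SMCSN}(0,1,\delta,G,\nub)$ random variable. By the hierarchical representation (\ref{smcsn_sto1}), it suffices to show that $V_i := \Delta(H_i - b) + \sqrt{\tau}\, T_i \sim CSN(0,1,\delta)$. Specializing the centered parameterization (\ref{centred_sn}) to $\mu = 0$ and $\sigma = 1$ gives $V_i \das (Z - b\delta)/\sqrt{1-b^2\delta^2}$ with $Z \sim SN(0,1,\lambda)$; substituting Henze's stochastic representation (\ref{stochastic_sn}) for $Z$ and collecting terms via $\Delta = \delta/\sqrt{1-b^2\delta^2}$ and $\sqrt{\tau} = \sqrt{1-\delta^2}/\sqrt{1-b^2\delta^2}$ produces precisely the target expression. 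Hence $W_i \sim \mathrm{SMCSN}(0,1,\delta,G,\nub)$ and
\[
P(Z_i > 0) \;=\; P(W_i < X_i^\top \bm{\beta}) \;=\; F(X_i^\top \bm{\beta}\mid \delta,\nub) \;=\; p_i.
\]

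The main obstacle is purely bookkeeping: one has to reconcile the minus sign in front of $\Delta(H_i - b)$ in (\ref{ziei_binf}) with the plus sign that naturally arises in the CSN stochastic representation. The symmetry of the normal noise $T_i$ is the key observation that absorbs this sign discrepancy; without a symmetric error term, the latent construction in (\ref{ziei_binf}) would not yield $F(\cdot\mid\delta,\nub)$ exactly, so careful attention to how $\Delta$ and $\sqrt{\tau}$ are built from $\delta$ and $b$ is essential in matching the two parameterizations.
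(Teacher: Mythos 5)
Your proposal is correct and follows essentially the same route as the paper's proof: both identify $F(\cdot\mid\delta,\nub)$ through the Henze stochastic representation of the $CSN(0,1/u_i,\delta)$ conditional (yielding $U_i^{-1/2}[\Delta(H_i-b)+\sqrt{\tau}\,T_i]$) and then use the symmetry of the normal term $T_i$ to reconcile the sign of $\Delta(H_i-b)$ with the latent formulation. The only difference is cosmetic: you argue from $P(Z_i>0)$ back to $F(X_i^\top\bm{\beta}\mid\delta,\nub)$, whereas the paper runs the same chain of equalities in the forward direction starting from $p_i$.
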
 

 \begin{proof}
     Considering $W_i \sim SMCSN(0,1,\delta,G,\nub)$, from \eqref{smcsn_sto1} we have that $W_i|U_i = u_i \sim CSN(0,1/u_i,\delta)$ with $U_i \sim G(.| \bm{\nu})$, and from the Henze stochastic representation it follows that $\xi_i = \delta^{1/3}s/\sqrt{u_i} = -b\Delta/\sqrt{u_i}$, $\omega_i = \sqrt{1+s^2\gamma^{2/3}}/\sqrt{u_i} = 1/(\sqrt{u_i}\sqrt{1-b^2\delta^2})$ and 
     \begin{align*}
         W_i|U_i = u_i \das -\frac{b\Delta}{\sqrt{u_i}} + \frac{1}{\sqrt{u_i}\sqrt{1-b^2\delta^2}}\left(\delta H_i + \sqrt{1-\delta^2}T_i\right) = \frac{1}{\sqrt{u_i}}[\Delta(H_i -b) + \sqrt{\tau} T_i],
     \end{align*}
     for $i=1,\dots,n$. Then,
	\begin{align}\label{wi-rep}
	\begin{split}
	W_i|H_i=h_i,U_i=u_i &\sim N(u_i^{-1/2}\Delta (h_i - b), u_i^{-1}\tau),\\
	H_i &\sim HN(0,1),\\
	U_i &\sim G(.| \bm{\nu}).
	\end{split}   
	\end{align}

	Therefore, considering \eqref{wi-rep}, $T_i \sim N(0,1)$, $\eta_i = {X_i}^\top\bm{\beta}$ and denoting the CDF of $H_i$ as $F(h_i)$, then
	\begin{align}
	\begin{split}
	p_i &= \int_0^{\infty} \int_0^{\infty} P\left(T_i \le \frac{\eta_i - u_i^{-1/2}\Delta (h_i - b)}{u_i^{-1/2}\sqrt{\tau}} |H_i,U_i\right) dG(u_i|\nub)dF(h_i)\\
	&= \int_0^{\infty} \int_0^{\infty} P\left(T_i > \frac{-\eta_i + u_i^{-1/2}\Delta (h_i - b)}{u_i^{-1/2}\sqrt{\tau}} |H_i,U_i\right) dG(u_i|\nub)dF(h_i)\\
	& = \int_0^{\infty} \int_0^{\infty} P\left(u_i^{-1/2}\sqrt{\tau} T_i + \eta_i - u_i^{-1/2}\Delta (h_i - b) >0 |H_i,U_i\right) dG(u_i|\nub)dF(h_i)\\  	
	& = \int_0^{\infty} \int_0^{\infty} P\left(Z_i >0 |H_i,U_i\right) dG(u_i|\nub)dF(h_i)\\  
	& =P(Z_i>0)	,
	\end{split}   
	\end{align} 
	where $Z_i \sim SMCSN(\eta_i,1,-\delta,G,\nub)$, for $i=1,\dots,n$. Therefore, it implies that considering $Y_i=I(Z_i>0)$ with $Z_i = {X_i}^\top\bm{\beta} + {U_i}^{-1/2}[\Delta(b-H_i) + \sqrt{\tau}T_i]$ is equivalent to consider $Y_i \sim \mbox{Ber}(p_i)$ with $p_i=F({X_i}^\top\bm{\beta}|\delta,\nub)$.
 \end{proof}

Following \eqref{lat_bin_zf} and \eqref{ziei_binf}, the hierarchical formulation of the model is given as follow:
\begin{align*}
\begin{split}
Z_i|U_i=u_i,H_i=h_i,y_i  &\sim N \left({X_i}^\top\bm{\beta} -{u_i}^{-1/2}\Delta(h_i-b),\frac{\tau}{u_i}\right)I(z_i,y_i), \\
H_i &\sim HN(0,1),\\
U_i &\sim G(.| \bm{\nu}),
\end{split}
\end{align*}
where $I(z_i,y_i)=I(z_i>0)I(y_1=1) + I(z_i\le 0)I(y_i=0)$.

Based on \cite{Kim}, we can see from \eqref{ziei_binf} that the intercept and $\Delta$ can be confounded with each other when we analyze the sign of $Z_i$. For example, in equation \eqref{lat_bin_zf} suppose that $Z_i = \beta_0 + \beta_1x_i - \Delta{U_i}^{-1/2}(H_i-b) + \sqrt{\tau}{U_i}^{-1/2}T_i = \beta_1x_i + [\beta_0 - \Delta{U_i}^{-1/2}(H_i-b)] + \sqrt{\tau}{U_i}^{-1/2}T_i = \beta_1x_i + \Delta^\ast + \sqrt{\tau}{U_i}^{-1/2}T_i$, we can see that $\Delta^\ast$ controls the skewness of the link function which depends on two parameters $\beta_0$ and $\delta$ and we are not able to know whether the sign of this skewness is controlled by $\beta_0$ or $\delta$, thus causing an identifiability problem. A way to handle with this issue is to consider a model without the intercept or do a reparameterization~\citep{Kim}. In this paper we propose another approach, if we fix the sign of the skewness parameter $\delta$ the identifiability problem is solved, as we will see in the simulation studies. This can be done by define a prior for $\delta$ in the interval $(0,1)$ or $(-1,0)$. In the next section, we show through simulations how to choose the sign of $\delta$ in practice.

\section{Bayesian Inference and simulation studies}\label{ss3:bayes}
To use the Bayesian paradigm, it is essential to obtain the joint posterior distribution. To obtain the posterior distribution, we need first to consider the complete likelihood
\begin{align}\label{completelikelihood}
L_c(\bm{\theta}|y,z,u,h) &\propto \prod_{i=1}^{n} \phi\left(z_i|\mu_i, \tau u_i^{-1}\right) I(z_i,y_i) f(h_i)h(u_i|\bm{\nu}) \nonumber\\
&\propto \prod_{i=1}^{n} \frac{\sqrt{u_i}}{\sqrt{\tau}} \exp\left\{ -\frac{u_i}{2\tau} \left(z_i - \mu_i \right)^2 \right\} I(z_i,y_i) \exp \left\{ - \frac{h_i^2}{2}\right\}h(u_i|\bm{\nu}) \\
&\propto \frac{\prod_{i=1}^{n} \sqrt{u_i}}{\tau^{n/2}} \exp\left\{ -\frac{1}{2\tau} \sum_{i=1}^{n}u_i\left(z_i - \mu_i \right)^2 \right\} I(z_i,y_i) \exp\left\{-\frac{\sum_{i=1}^{n}h_i^2}{2}\right\}\prod_{i=1}^{n}h(u_i|\bm{\nu}),\nonumber
\end{align}
where $\mu_i = \bm{X}_{i}^\top\bm{\beta}+ \frac{\Delta}{\sqrt{u_i}}(b-h_i)$ and $\bm{\theta} = (\bm{\beta},\delta,\bm{\nu})$. However, since the necessary integrals to obtain the posterior distribution are not easy to calculate, it is not possible to obtain such distribution analytically. However, it is possible to obtain numerical approximation for the marginal posterior distributions of interest by using MCMC algorithms, see \cite{Geman1984} and \cite{Hasting}.

We need to consider a prior distribution for $\bm{\theta}$ such that $\pi(\bm{\theta}) = \pi(\bm{\beta}) \pi(\delta)\pi(\bm{\nu})$. To propose a prior for $\delta$, it is interesting to study the importance of this parameter in the calculation of the skewness of the distributions used in the link functions. For example, consider the Figure \ref{delta_e_gama} that plots the relation between $\delta$ and the Pearson skewness coefficient $\gamma$. We can see that values of $\delta<0.5$ represents a very low skewness coefficient ($\gamma<0.035$) and just for $\delta>0.9$ we have values of $\gamma>0.5$, in special when $\delta>0.99$ we have $\gamma > 0.9173$, for negative $\delta$ the comments are analogous. Then, if we suspect that the link function is skewed we want $\delta$ to have high values in absolute value, and in this case prior distributions with very heavy tails are preferable. 

\begin{figure}[h!]
    \centering
    \includegraphics[scale = 0.6]{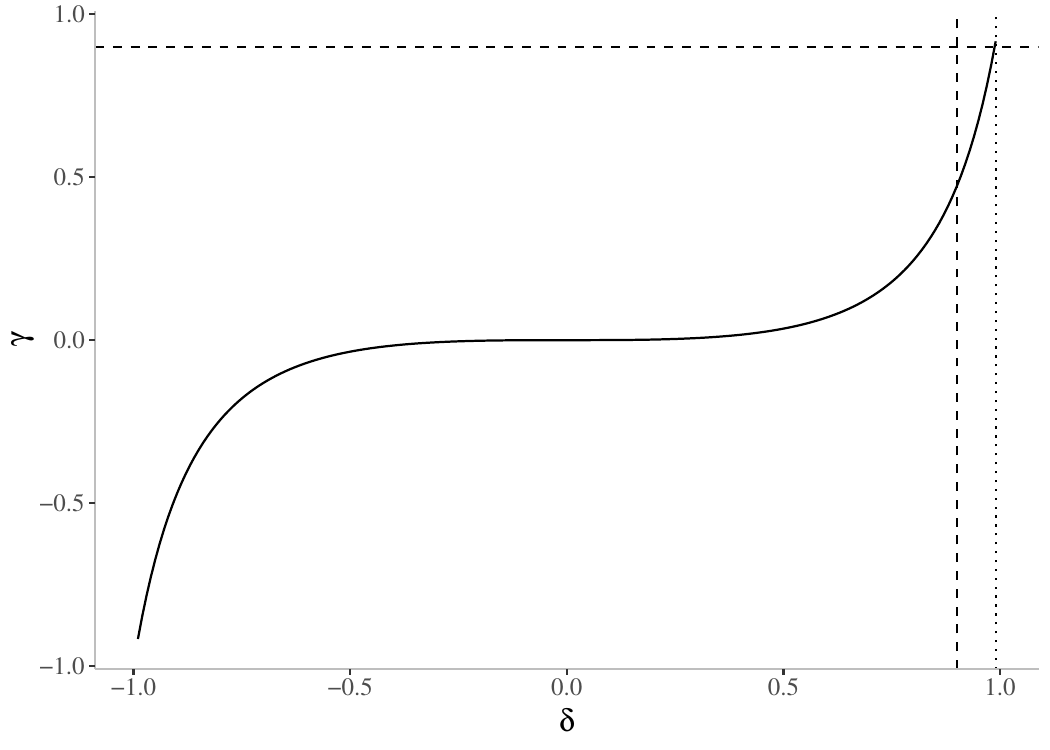}
    \caption{Relation between $\gamma$ and $\delta$. The dashed line represents values of 0.9 and the dotted line represents 0.99.}
    \label{delta_e_gama}
\end{figure}

Said that, based on the prior distribution proposed by \cite{Caio2012} for the skewness parameter of the CSN distribution, we propose to use the following prior $\pi(\delta) = 2/(\pi\sqrt{1-\delta^2})I(\delta \in A)$, where $A$ is $(0,1)$ or $(-1,0)$. In Figure \ref{pi_de_delta} we can see that the density $\pi(\delta)$ have a heavy tail close to 1 when $A = (0,1)$, and close to -1 when $A = (-1,0)$.

\begin{figure}[h!]
     \centering
     \begin{subfigure}[b]{0.49\textwidth}
         \centering
         \includegraphics[width=\textwidth]{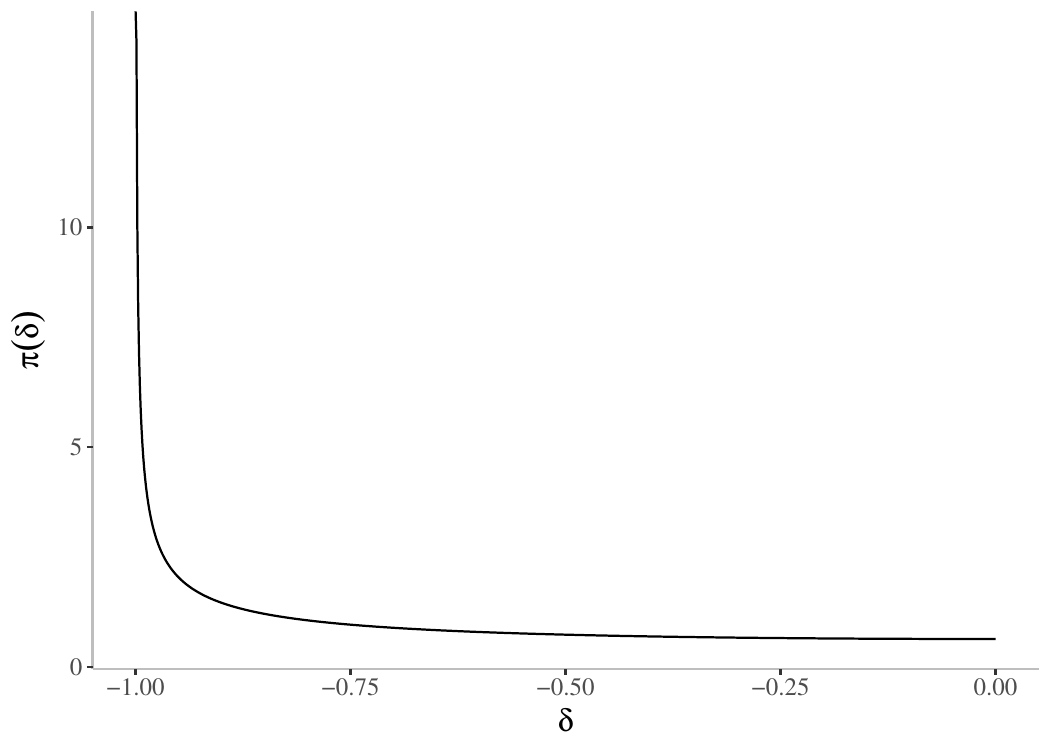}
         \caption{$A = (-1,0)$}
         \label{fig:y equals x}
     \end{subfigure}
     \hfill
     \begin{subfigure}[b]{0.49\textwidth}
         \centering
         \includegraphics[width=\textwidth]{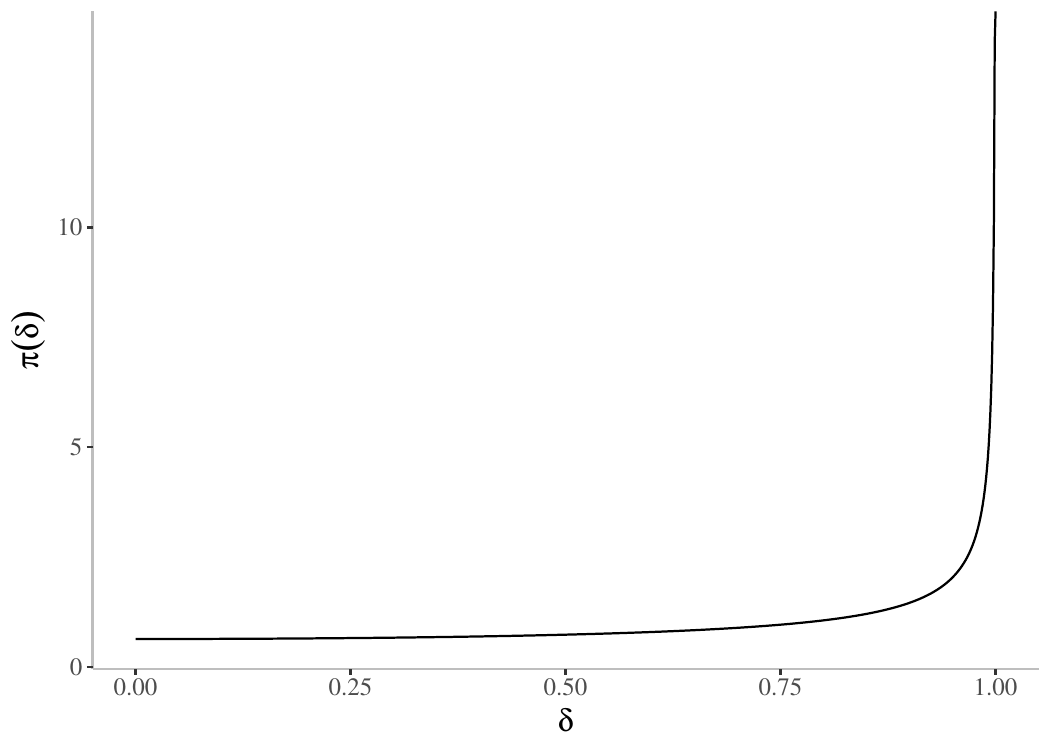}
         \caption{$A = (0,1)$}
         \label{fig:three sin x}
     \end{subfigure}
        \caption{Density function of the prior proposed for $\delta$.}
        \label{pi_de_delta}
\end{figure}

In the next subsections, we performed simulation studies in order to evaluate the performance of the proposed model and the estimation method based on the MCMC algorithms. All these models were implemented in Stan \citep{stan2024} through the interface provided by the \textit{cmdstanr} package \citep{cmdstan} available in R program \citep{R}. The codes are available from the authors upon request. To eliminate the effect of the initial values and to avoid correlations problems, we run a MCMC chain of size 60.000 with a burn-in of 40.000 and a thin of 20, retaining a valid MCMC chain of size 1000.

Since in the CSCN distribution, $U_i$ is a binary random variable and knowing that discrete latent variables are not allowed in the Stan, it is useful to define the model without the presence of $U_i$. One way to do this is to marginalize $U_i$ in the complete likelihood. If we marginalize the complete likelihood in \eqref{completelikelihood} with respect to $U_i$, we will have the distribution of $Z_i|H_i$, for $i=1,\dots,n$. Then,
\begin{align*}
    f_{Z_i|H_i} &= \sum_{u_i \in \{1,\nu_2\}} f_{\bm{Z}_i|H_i,U_i,\bm{\theta}}(z_i) I(z_i,y_i) f(h_i)h(u_i|\bm{\nu})\\
    &= \nu_1f_{\bm{Z}_i|H_i,U_i=\nu_2,\bm{\theta}}(z_i) I(z_i,y_i) f(h_i) + (1-\nu_1)f_{\bm{Z}_i|H_i,U_i=1,\bm{\theta}}(z_i) I(z_i,y_i) f(h_i),
\end{align*}
and the new complete likelihood is then formed by $f_{\bm{Y},\bm{H}, \bm{Z}|\bm{\Theta}}(\bm{y},\bm{h}) = \prod_{i=1}^{n} f_{\bm{Z}_i|H_i,\bm{\theta}}(z_i) I(z_i,y_i) f(h_i)$ which can be used to implement the model based on the CSCN link function in the Stan.

\subsection{Sign choice for $\delta$} \label{sub_sign_delta}

The goal of this simulation study is to define a choice criterion for the sign of $\delta$ when the researcher does not have prior information about the skewness sign in the link function. Then, we propose to use the following scheme:
\begin{enumerate}
    \item Fit a probit regression $p_i = \Phi(\bm{X}_i^\top\bm{\beta})$ to the data, generating $K$ MCMC samples.
    \item For each $k=1,\dots,K$ calculate the residual $\epsilon_{ik} = Z_{ik}-\bm{X}_i^\top\bm{\beta}_k$, where $Z_{ik}$ and $\bm{\beta}_k$ are the $k$th MCMC sample of the posterior of $Z_i$ and $\bm{\beta}$, respectively.
    \item For each $k=1,\dots,K$ calculate the samples skewness of $\epsilon_{1k},\dots,\epsilon_{nk}$ given by $SK(\epsilon_k) = [\sum_{i=1}^n (\epsilon_{ik}-\bar{\epsilon}_k)^3/n]/[\sum_{i=1}^n (\epsilon_{ik}-\bar{\epsilon}_k)^2/n]^{3/2}$, where $\bar{\epsilon}_k = \sum_{i=1}^n\epsilon_{ik}/n$.
    \item Apply some statistic, such as mean, median or mode, to $SK(\epsilon_1),\dots, SK(\epsilon_K)$ to estimate the skewness of the residuals. Then, use the sign of this estimates as the sign of $-\bm{\delta}$.
\end{enumerate}

To evaluate the performance of the proposed scheme to choose the sign of $\delta$, we simulate from samples of size $n=200$, considering 50 replicas and the following stochastic representation:
\begin{align*}
Y_i &= I(Z_i>0),\\
Z_i &= \beta_0+\beta_1 x_i+ \varepsilon_i \quad i=1,\dots,n,
\end{align*}
where $\beta_0=1$, $\beta_1=2$, $\varepsilon_i$ belongs to the CSN distribution with $\sigma^2=1$ and $\delta \in \{-0.99,-0.9,0.9,0.99\}$, which allows the model to have strong negative, medium negative, medium positive and strong positive skewness, respectively. The covariate was simulated from a N(0,1) and was standardized. We adopted weakly informative priors for the regression coefficients: $\beta_0 \sim N(0,1000)$ and $\beta_1 \sim N(0,1000)$.

We can see from Table \ref{signal_choice} that under strong skewness the scheme selects the correct sign 94\% of the time, while under medium skewness it selects 72\% of the time correctly, showing that the proposed scheme works well if we suspect that there is a considerable skewness in the link function.

\begin{table}[h!]
    \centering
    \caption{Number of times that the correct sign of $\delta$ was selected correctly under posterior mean, median and mode estimates.}
    \label{signal_choice}
    \begin{tabular}{c|ccc}
     \hline
     $\delta$ & Posterior mean & Posterior median & Posterior mode \\
     \hline
     0.99 & 47 & 47 & 43\\
     0.9 & 36 & 37 & 39\\
     -0.9 & 36 & 36 & 36 \\
     -0.99 & 47 & 47 & 41\\
     \hline
    \end{tabular}
\end{table}

\subsection{Prior for regression coefficients} \label{ss3: reg recov}

In parameter recovery simulation studies, an overestimation of the regression coefficients was initially noticed when we assume that $\bm{\beta} \sim N (\bm{\mu}_b, \bm{\Sigma}_b)$, especially when considering medium and small sample sizes, resulting in the need to define priors for $\bm{\beta}$ in order to reduce the bias. Then, we propose to use a Zellner's g-prior~\citep{zellner1986assessing} given by $\bm{\beta} \sim N (\bm{\mu}_b, g\bm{\Sigma}_b)$, which allows a prior correlation between the regression coefficients and automatically selects the prior variance~\citep{gosho2023bias}. \cite{liang2008mixtures} propose to use a hyper-g prior given by $\pi(g|\alpha) = \frac{a-2}{4}(1+g)^{-\alpha/2}$, $g > 0$, $\alpha>2$, which allows us to control the degree of shrinkage noticing that the shrinkage factor $g/(1+g) \sim beta(1,\alpha/2-1)$. Then, if $\alpha=4$ the shrinkage factor is uniform and if $\alpha>4$ there is more mass on shrinkage values near to 0~\citep{liang2008mixtures}. Said that, we will assume two possibilities: $\alpha=4$ or $2<\alpha\leq 4$ with $\alpha \sim Uniform(2,4)$. Also, we will assume $\bm{\mu}_b = \bm{0}$ which is a noninformative choice and $\bm{\Sigma} = \text{diag}(1/2,\dots,1/2)$ as recommended by \cite{held2017adaptive}.

We simulate from a sample of size $n = 100$, considering 10 replicas and the same model of the subsection \ref{sub_sign_delta}. In addition to the two hyper-g priors commented, we also consider classical normal priors $\beta_0 \sim N(0,1000)$ and $\beta_1 \sim N(0,1000)$ for comparative purposes. From Table \ref{reg_hyper} it can be seen that the classic normal prior causes an overestimation, and the use of the hyper-g priors reduces the bias. More specifically, if we use the hyper-g prior with $\alpha=4$ we noticed a reduction of 90.45\% in the bias for $\beta_0$ and 84.89\% for $\beta_1$. Comparing the hyper-g priors with $\alpha=1$ and $2<\alpha\leq 4$ we do not notice any major differences, then, we propose the use of hyper-g priors with $\alpha=4$ for the SMCSN link models proposed in this paper.

\begin{table}[h!]
    \centering
        \caption{Estimates for regression parameters using the classic normal prior and the hyper-g prior considering $\alpha=4$ and $2 < \alpha \leq 4$, under the posterior mean, median and mode statistics.}
    \label{reg_hyper}
    \begin{tabular}{c|cccc}
     \hline
     Parameter & prior & Posterior mean & Posterior median & Posterior mode \\
     \hline
     \multirow{ 3}{*}{$\beta_0$} & normal & 1.20360 & 1.18169 & 1.12366\\
     & hyper-g, $\alpha=4$ & 1.08035 & 1.07125 & 0.98819 \\
     & hyper-g, $2<\alpha \leq 4$ & 1.10122 & 1.08472 & 1.08610 \\
     \hline
     \multirow{ 3}{*}{$\beta_0$} & normal & 2.34638 & 2.31577 & 2.27255\\
     & hyper-g, $\alpha=4$ & 2.10168 & 2.08638 & 2.04118 \\
     & hyper-g, $2<\alpha \leq 4$ & 2.14862 & 2.11199 & 1.96291 \\
     \hline
    \end{tabular}
\end{table}

\subsection{Parameter recovery simulation study} \label{ss3: par recov}
The main goal of this simulation study is to measure the impact of the sample size on the parameter recovery. We have considered different scenarios based on the crossing of the levels of some factors of interest. For the five regression models explored in this work, we simulate from samples of size n=100 and 250, considering 10 replicas. We have generated replicas from
\begin{align*}
Y_i &= I(Z_i>0),\\
Z_i &= \beta_0+\beta_1 x_i+ \varepsilon_i \quad i=1,\dots,n
\end{align*}
where $\beta_0=1$, $\beta_1=2$, $\varepsilon_i$ belongs to the SMCSN family of distributions with $\sigma^2=1$ and $\delta \in \{-0.99,0.99\}$, which allows the model to have strong negative and strong positive skewness, respectively. Also, we set $\nu = 3$ for the CST and CSS distribution and $\bm{\nu} = (\nu_1,\nu_2) = (0.7,0.7)$ for CSCN distribution.
These values for $\bm{\nu}$ were chosen in order to have distributions with heavy tails. The covariate was simulated from a N(0,1) distributions and was standardized.

Let $\theta$ be any parameter to be estimated and $\hat{\theta}_r$ the estimate based on some posterior statistic (mean, median or mode) from the $r$-th replica. We compare the performance of the estimators using some appropriate statistics: mean of the estimates of $\theta$ (Est) $\bar{\hat{\theta}} =\frac{\sum_{r=1}^{10} \hat{\theta_r} }{10}$, standard deviation of the estimates (SD) $SD_{\theta} =\sqrt{\frac{\sum_{r=1}^{10} ( \hat{\theta_r}  -\bar{\hat{\theta}} )^2  }{9}}$, relative bias (Rel Bias) $\frac{\rvert  Bias_{\theta}\rvert }{\theta}$, where $Bias_{\theta} = \bar{\hat{\theta}} -\theta$ is the bias and the mean square error (MSE) $MSE_{\theta} = Bias_{\theta}^2 + SD_{\theta}^2$.

For all distributions we compare the posterior statistics using the MSE. For the regression parameters ($\beta_0$,$\beta_1$) the posterior median provides the best estimates for CSN, CST and CSS link functions, furthermore, the mean is the best for the CSCN link function. On the other hand, for $\delta$, the posterior mode has the best performances for all models.  Finally, for $\bm{\nu}$ the posterior mode is suggested for the CST link function, the posterior median for the CSS link function, and the posterior mean for the CSCN link function.

The results only for the selected posterior statistics for all models are showed in Tables \ref{recor:csn} to \ref{recor:cscn}. From these tables we can notice that for all parameters, all models and all sample sizes, the estimates are accurate with low bias. For the regression parameters, it is noticed that the MSE decreases with the increase in sample size, this happens also for $\delta$, except for the CST link model in which an underestimation can be seen. Regarding the shape parameters ($\bm{\nu}$), it is noticed that there was no improvement when we increasing the sample size, in general a low bias is noticed and for the CST and CSS links we can see a higher MSE compared to the other parameters, possibly caused by the standard deviation, however, all these characteristics did not affect the estimation of the regression and skewness parameters.

\begin{table}[h!]
\centering
	\caption{Results of the simulation study for the CSN link model.}
	\begin{tabular}{l|cccccc}
			\hline
			n	&  Parameter & Real & Est & SD & Rel Bias & MSE\\ 
			\hline
			\multirow{4}{*}{100}
			& $\beta_0$  & 1.0000  & 1.0196 & 0.2432 & 0.0196 & 0.0595  \\
			& $\beta_1$  & 2.0000  & 2.0123 & 0.3036 & 0.0061 & 0.0923 \\
			& $\delta$   & 0.99 & 0.9618 & 0.0485 & 0.0284 & 0.1013 \\
			\hline	
			\multirow{4}{*}{250}
			& $\beta_0$  & 1.0000  & 1.0410 & 0.2136 & 0.0410 & 0.0473 \\
			& $\beta_1$  & 2.0000  & 1.9861 & 0.1972 & 0.0069 & 0.0391\\
			& $\delta$   & 0.99 & 0.9763 & 0.0323 & 0.0138 & 0.0012 \\
			\hline
	\end{tabular}
	\label{recor:csn}
\end{table}

\begin{table}[h!]
\centering
	\caption{Results of the simulation study for the CST link model.}
	\begin{tabular}{l|cccccc}
			\hline
			n	&  Parameter & Real & Est & SD & Rel Bias & MSE\\ 
			\hline
			\multirow{4}{*}{100}
			& $\beta_0$  & 1.0000  & 0.9349 & 0.2395 & 0.0651 & 0.0616  \\
			& $\beta_1$  & 2.0000  & 1.9188 & 0.7255 & 0.0406 & 0.5330 \\
			& $\delta$   & 0.99   & 0.9752 & 0.0217 & 0.0150 & 0.0007 \\
                & $\nu$      & 3       & 2.9156 & 0.8517 & 0.0281 & 0.7326 \\
			\hline	
			\multirow{4}{*}{250}
			& $\beta_0$  & 1.0000  & 0.9469 & 0.2003 & 0.0531 & 0.0429 \\
			& $\beta_1$  & 2.0000  & 1.9140 & 0.1809 & 0.0430 & 0.0401\\
			& $\delta$   & 0.99   & 0.9368 & 0.0865 & 0.0538 & 0.0103 \\
                & $\nu$      & 3       & 3.0892 & 1.0091 & 0.0297 & 1.0262 \\
			\hline
	\end{tabular}
	\label{recor:cst}
\end{table}

\begin{table}[h!]
\centering
	\caption{Results of the simulation study for the CSS link model.}
	\begin{tabular}{l|cccccc}
			\hline
			n	&  Parameter & Real & Est & SD & Rel Bias & MSE\\ 
			\hline
			\multirow{4}{*}{100}
			& $\beta_0$  & 1.0000  & 1.0482 & 0.4803 & 0.0482 & 0.2330  \\
			& $\beta_1$  & 2.0000  & 1.8997 & 0.5495 & 0.0501 & 0.3120 \\
			& $\delta$   & 0.99   & 0.9520 & 0.0970 & 0.0384 & 0.0108 \\
                & $\nu$      & 3       & 3.1564 & 0.5371 & 0.0521 & 0.3129 \\
			\hline	
			\multirow{4}{*}{250}
			& $\beta_0$  & 1.0000  & 0.9916 & 0.1849 & 0.0083 & 0.0343 \\
			& $\beta_1$  & 2.0000  & 1.9526 & 0.2457 & 0.0237 & 0.0626\\
			& $\delta$   & 0.99   & 0.9728 & 0.0359 & 0.0174 & 0.0016 \\
                & $\nu$      & 3       & 3.0825 & 0.6419 & 0.0275 & 0.4189 \\
			\hline
	\end{tabular}
	\label{recor:css}
\end{table}

\begin{table}[h!]
\centering
	\caption{Results of the simulation study for the CSCN link model.}
	\begin{tabular}{l|cccccc}
			\hline
			n	&  Parameter & Real & Est & SD & Rel Bias & MSE\\ 
			\hline
			\multirow{4}{*}{100}
			& $\beta_0$  & 1.0000  & 1.1020 & 0.2644 & 0.1020 & 0.0803  \\
			& $\beta_1$  & 2.0000  & 2.0120 & 0.3845 & 0.0060 & 0.1480 \\
			& $\delta$   & 0.99   & 0.9764 & 0.0206 & 0.0137 & 0.0006 \\
                & $\nu_1$    & 0.5     & 0.4567 & 0.0209 & 0.0866 & 0.0023 \\
                & $\nu_2$    & 0.5     & 0.5486 & 0.0352 & 0.0973 & 0.0036 \\
			\hline	
			\multirow{4}{*}{250}
			& $\beta_0$  & 1.0000  & 1.0101 & 0.1240 & 0.0101 & 0.0155 \\
			& $\beta_1$  & 2.0000  & 2.0006 & 0.1693 & 0.0003 & 0.0287\\
			& $\delta$   & 0.99   & 0.9786 & 0.0159 & 0.0115 & 0.0004 \\
                & $\nu_1$    & 0.5     & 0.4689 & 0.0185 & 0.0622 & 0.0013 \\
                & $\nu_2$    & 0.5     & 0.5829 & 0.0392 & 0.1658 & 0.0084 \\
			\hline
	\end{tabular}
	\label{recor:cscn}
\end{table}

\section{Residual analysis} \label{sss3: res}

Similarly to the latent Bayesian residual for the skew probit regression developed in \cite{farias2012}, we can define the latent residuals for the binary regression model with link function based on the SMCSN distributions from the stochastic representation given in \eqref{prop3}. Then, we can define the residual for the $i$th subject as
\begin{align}
\epsilon_i = \frac{Z_i - \Xb^\top_i\betab + U_i^{-1/2}\Delta(H_i-b) }{\sqrt{\tau}},
\label{res_bin}
\end{align}
where $Z_i, U_i$ and $H_i$ are the latent variables. It follows that, conditioned on $(\betab,\delta)$, the residual \eqref{res_bin} is normally distributed a priori. A way to check lack of fit is to build the normal envelope plot for these residuals. 

\section{Application} \label{s3: application}

We analyze, using the developed models and the probit one, the heart disease data~\citep{detrano1989international} available in the UCI machine learning repository at \url{https://archive.ics.uci.edu/dataset/45/heart+disease}. This data consists of a sample of 303 heart disease diagnosis from patients of Cleveland. The goal of this example is to relate the presence of heart disease in the patient with some covariates of interest, also we will show that the class of models proposed in this paper can provide better fits than the usual probit model. In this paper we only considered subjects with no missing values, generating 297 subjects. We considered as the binary response $Y_i$ the presence of heart disease with values 1 for presence and 0 for absence. Also, we considered the covariates selected by \cite{lee2019identifiability} which implies the following model:
\begin{align*}
Y_i &= I(Z_i>0),\\
Z_{i} &= \beta_0+\beta_1 sex_{i}+ \sum^3_{j=1}\beta_{2j} CP_{i} + \beta_3 BP_{i}+  \sum^2_{j=1}\beta_{4j} slope_{i} + \beta_5 CF_{i} + \sum^2_{j=1}\beta_{6j} thal_{i} + \varepsilon_i,
\end{align*}
for $i=1,\dots,297$, where $sex_{i}$ is the sex (1=male, 0=female), 
$CP_i$ represents the chest pain types (0 = typical angina, 1 = atypical angina, 2 = non-anginal pain, 3 = asymptomatic), $slope_{i}$ is the slope of the peak exercise ST segment (0 = upsloping, 1 = flat, 2 = downsloping), $thal_{i}$ is the thallium heart scan results (0 = normal, 1 = fixed defect, 2 = reversable defect), $BP_i$ is the resting blood pressure on admission to the hospital (mmHg) and $CF_{i}$ is the number of major vessels (0,1,2 or 3) colored by flourosopy, centered in their respective mean and standard deviation. 

We fitted five models, assuming that: $\varepsilon_i \stackrel{iid}{\sim} CST(0,1,-\delta,\nu)$, or , $\varepsilon_i \stackrel{iid}{\sim} CSS(0,1,-\delta,\nu)$, or $\varepsilon_i \stackrel{iid}{\sim} CSCN(0,1,-\delta,\nu_1,\nu_2)$, or $\varepsilon_i \stackrel{iid}{\sim} CSN(0,1,-\delta)$, or $\varepsilon_i \stackrel{iid}{\sim} N(0,1)$, that we denote, respectively, by CST, CSS, CSCN, CSN and N. 
The parameters for the MCMC algorithm and the adopted prior distributions were the same used in the simulation study described in Section \ref{ss3: par recov}. From Figure \ref{qqplot}, QQ plots with envelopes for all fitted models are shown. It is possible to see that for all models there are some points lying outside the confidence bands, possibly caused by the influential observations and/or lack of skewness, except for the CSS model.

\begin{figure}[h!]
     \centering
     \begin{subfigure}[b]{0.49\textwidth}
         \centering
         \includegraphics[width=\textwidth]{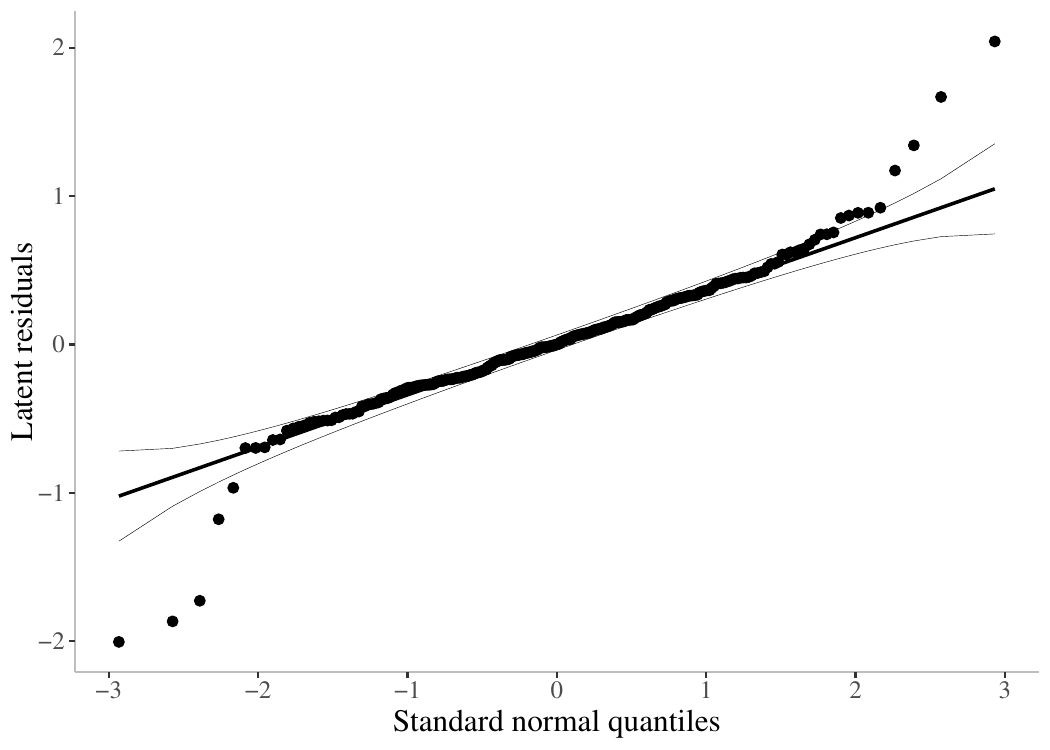}
         \caption{Probit}
     \end{subfigure}
     \begin{subfigure}[b]{0.49\textwidth}
         \centering
         \includegraphics[width=\textwidth]{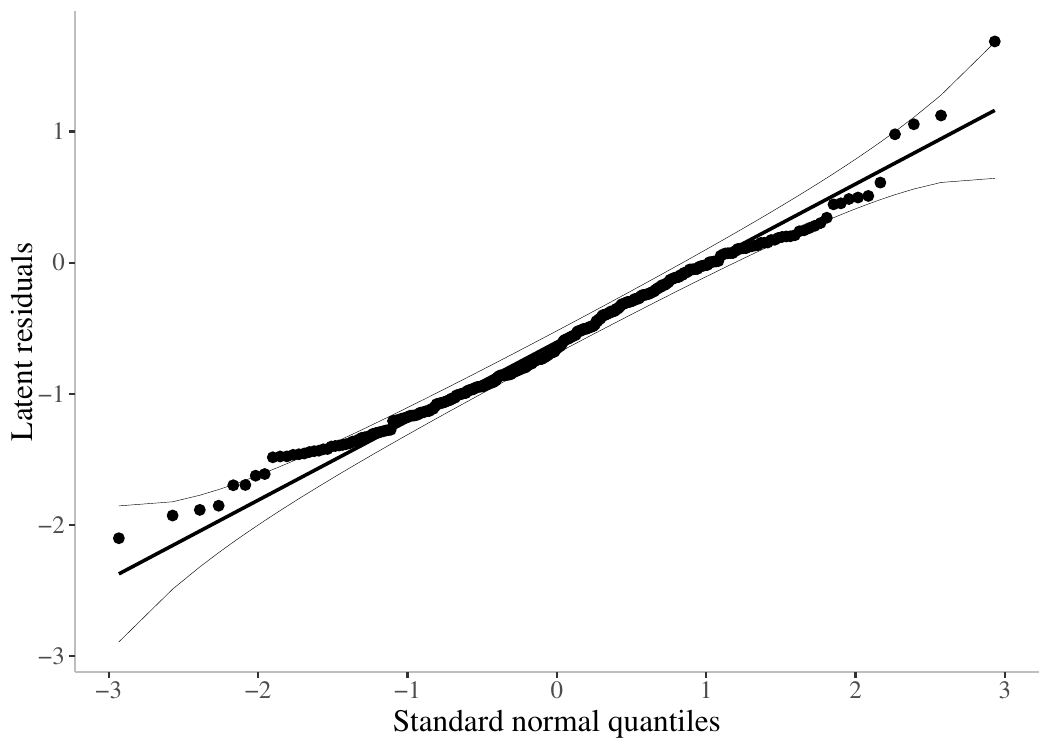}
         \caption{CSN}
     \end{subfigure}
     \begin{subfigure}[b]{0.49\textwidth}
         \centering
         \includegraphics[width=\textwidth]{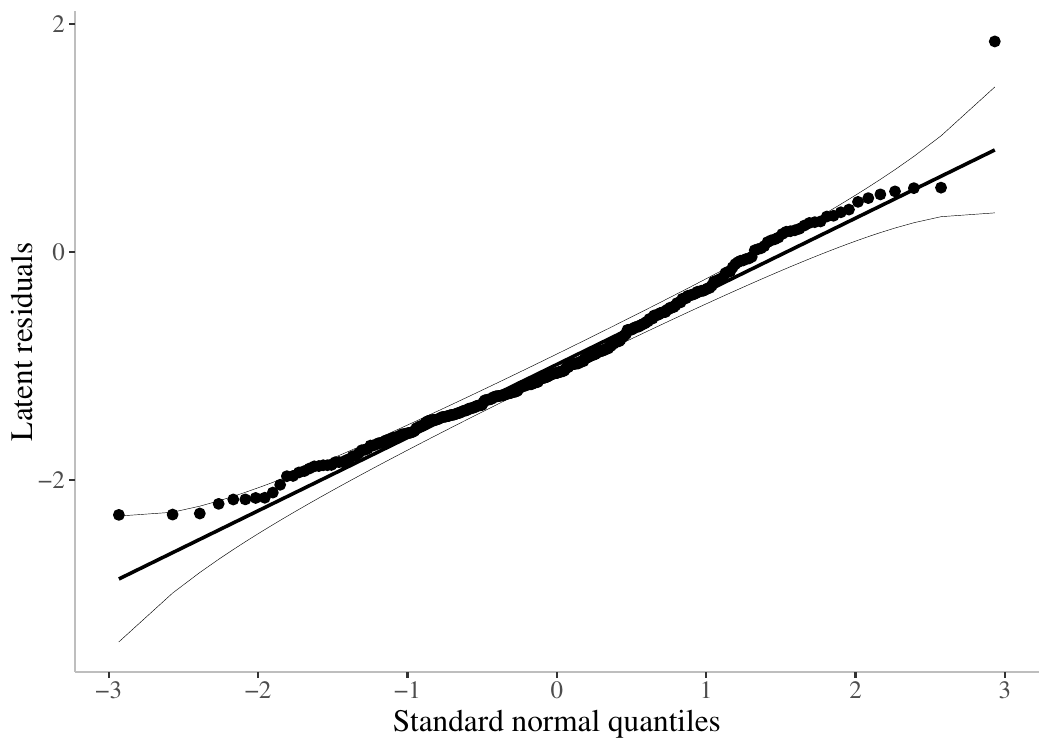}
         \caption{CST}
     \end{subfigure}
     \begin{subfigure}[b]{0.49\textwidth}
         \centering
         \includegraphics[width=\textwidth]{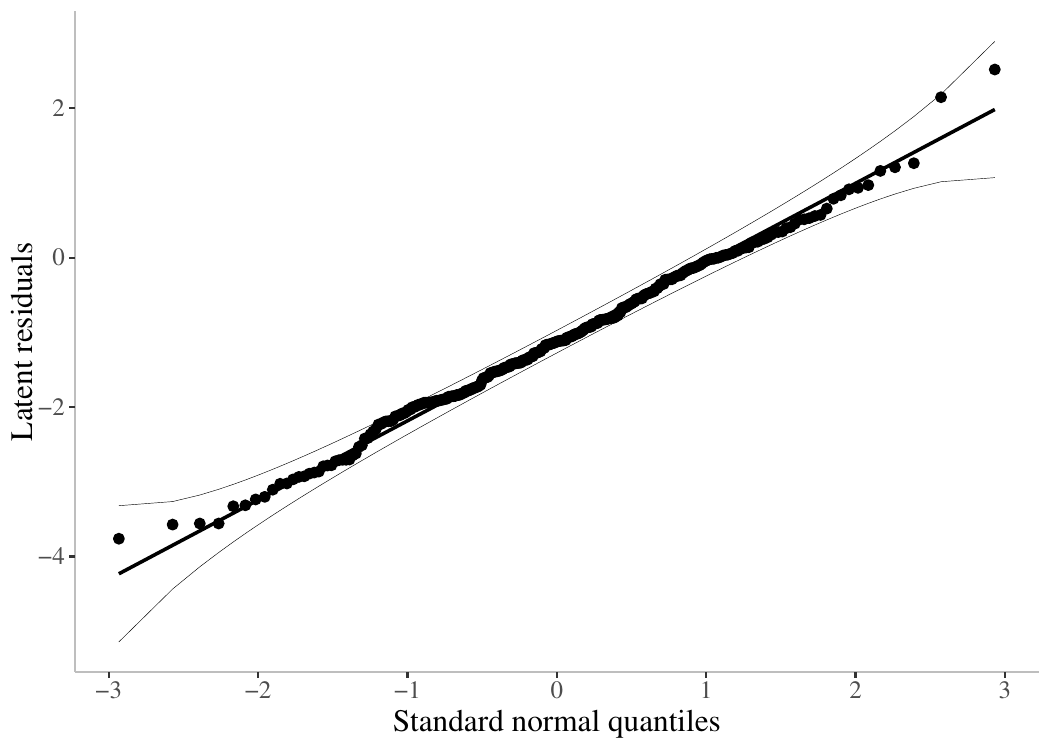}
         \caption{CSS}
     \end{subfigure}
     \begin{subfigure}[b]{0.49\textwidth}
         \centering
         \includegraphics[width=\textwidth]{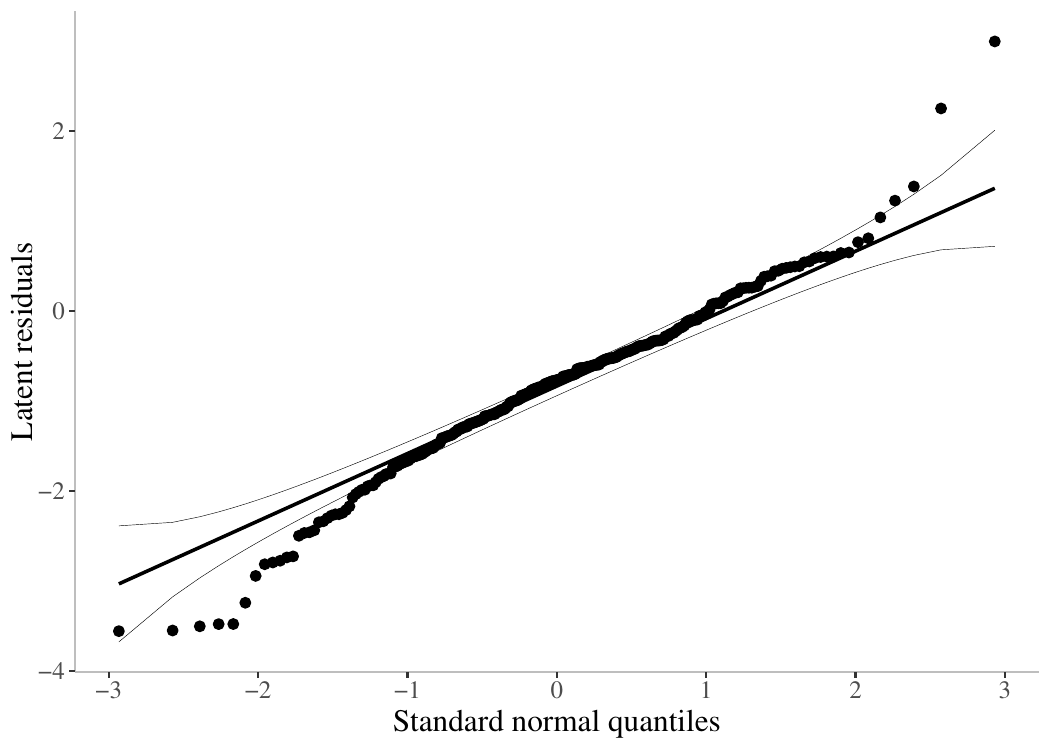}
         \caption{CSCN}
     \end{subfigure}
        \caption{QQ plots with envelopes for the fitted models.}
        \label{qqplot}
\end{figure}

From Figure \ref{posterior_delta}, the posterior distributions indicate that the the link function is positive skewed, since $\delta$ is concentrated towards positive values, then a skewed link function is more appropriate for this data. Figure \ref{posterior_nu} presents the posterior distributions for the shape parameters, we can see that the point of mass is concentrated in values that induce heavy tails in the distribution of $F(\bm{X}_i^\top \bm{\beta}|\delta,\bm{\nu})$, then we have an indication that CST, CSS and CSCN overperform the CSN, specially CSS that have the best QQ plot. Said that, the most appropriate model for the data modeling is the CSS model.

\begin{figure}[h!]
     \centering
     \begin{subfigure}[b]{0.49\textwidth}
         \centering
         \includegraphics[width=\textwidth]{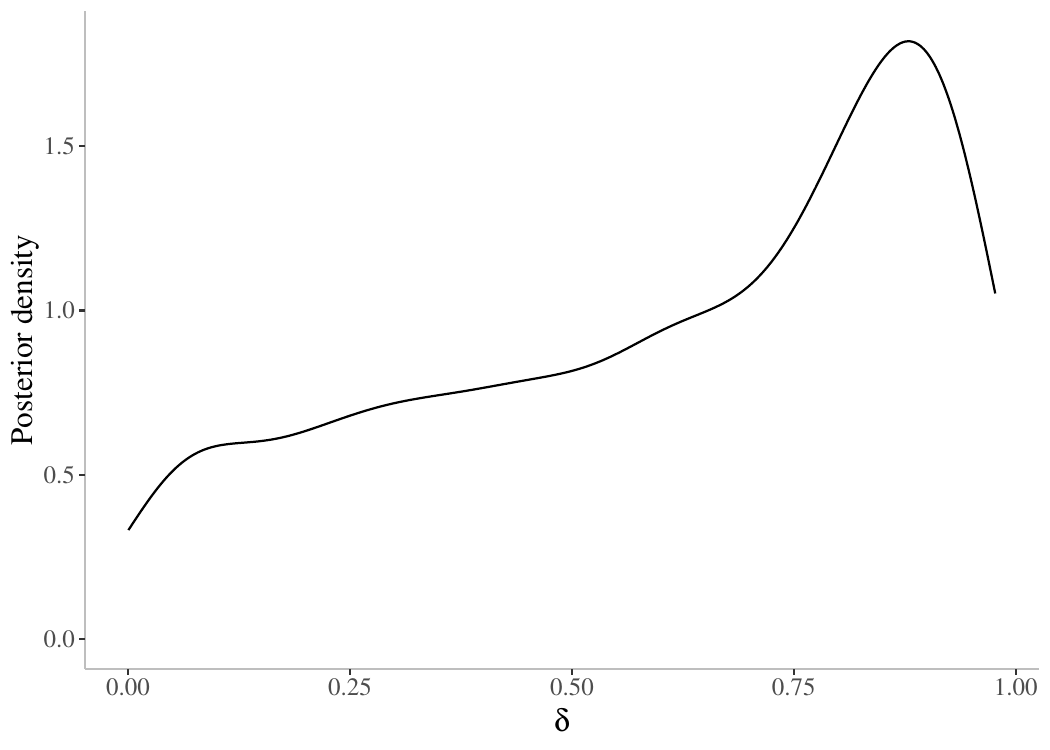}
         \caption{CSN}
     \end{subfigure}
     \begin{subfigure}[b]{0.49\textwidth}
         \centering
         \includegraphics[width=\textwidth]{delta_csn.pdf}
         \caption{CST}
     \end{subfigure}
     \begin{subfigure}[b]{0.49\textwidth}
         \centering
         \includegraphics[width=\textwidth]{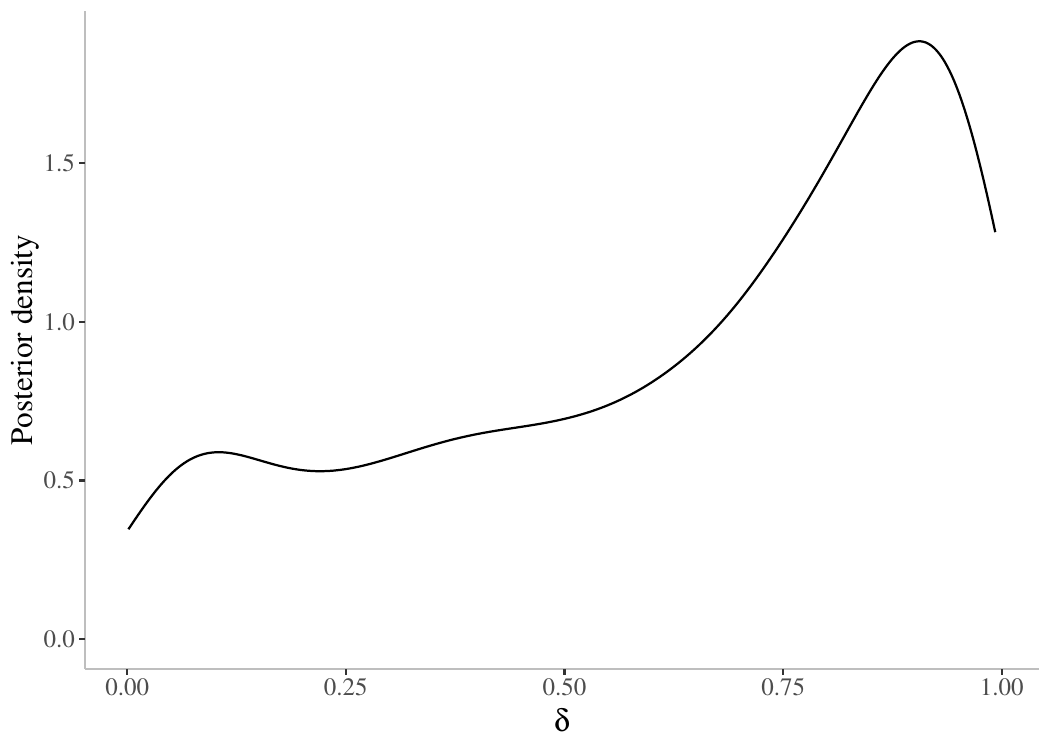}
         \caption{CSS}
     \end{subfigure}
     \begin{subfigure}[b]{0.49\textwidth}
         \centering
         \includegraphics[width=\textwidth]{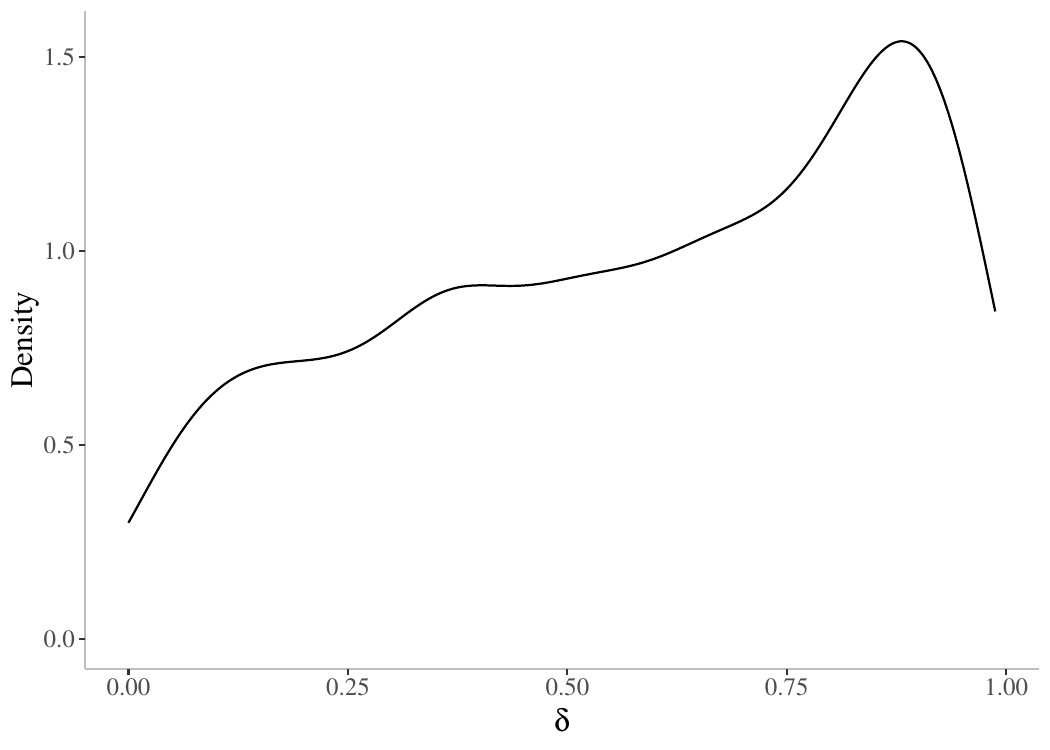}
         \caption{CSCN}
     \end{subfigure}
     \hfill
        \caption{Posterior distribution of $\delta$ for the CSN, CST, CSS and CSCN models.}
        \label{posterior_delta}
\end{figure}

\begin{figure}[h!]
     \centering
     \begin{subfigure}[b]{0.49\textwidth}
         \centering
         \includegraphics[width=\textwidth]{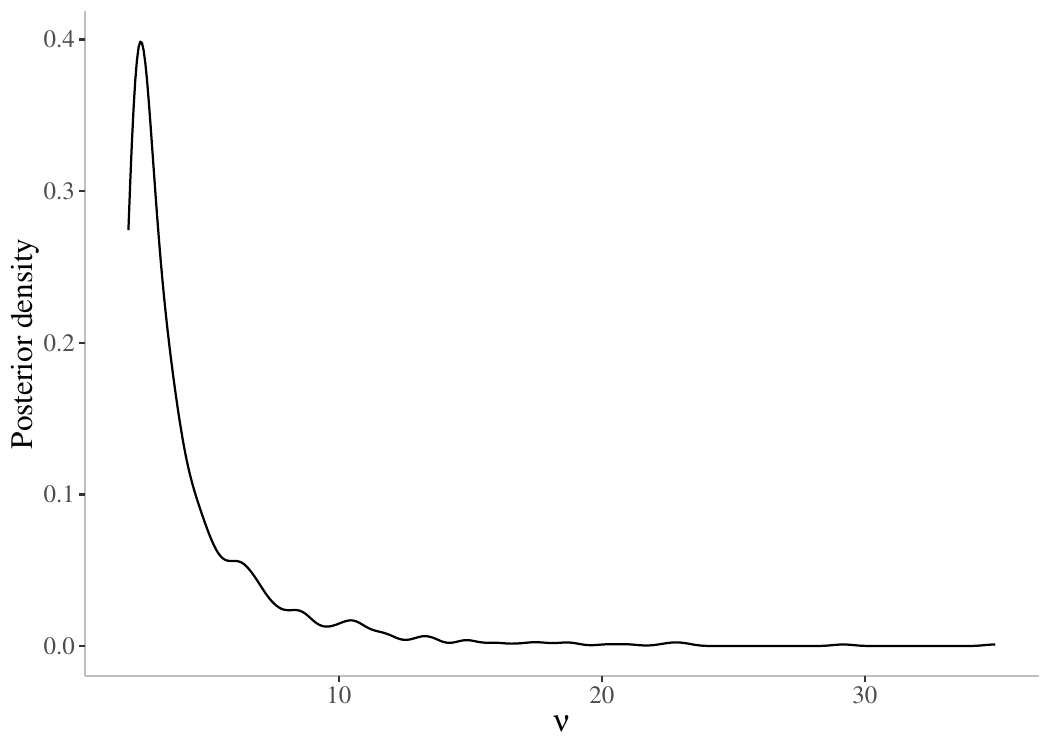}
         \caption{CST}
     \end{subfigure}
     \begin{subfigure}[b]{0.49\textwidth}
         \centering
         \includegraphics[width=\textwidth]{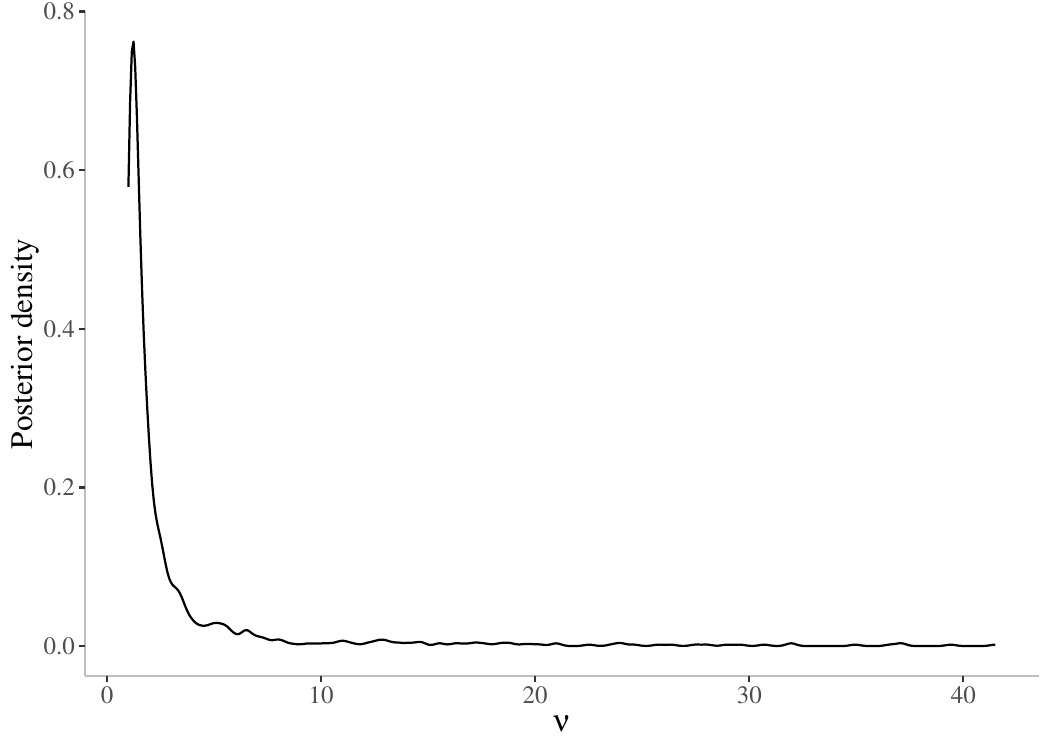}
         \caption{CSS}
     \end{subfigure}
     \begin{subfigure}[b]{0.49\textwidth}
         \centering
         \includegraphics[width=\textwidth]{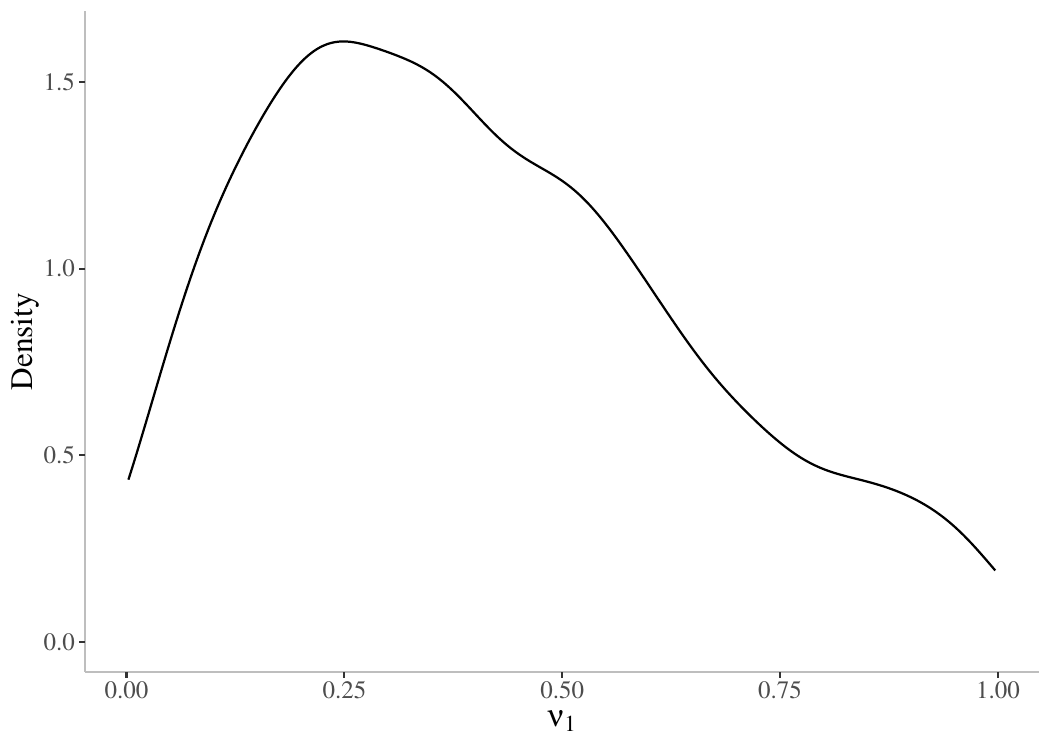}
         \caption{CSCN ($\nu_1$)}
     \end{subfigure}
     \begin{subfigure}[b]{0.49\textwidth}
         \centering
         \includegraphics[width=\textwidth]{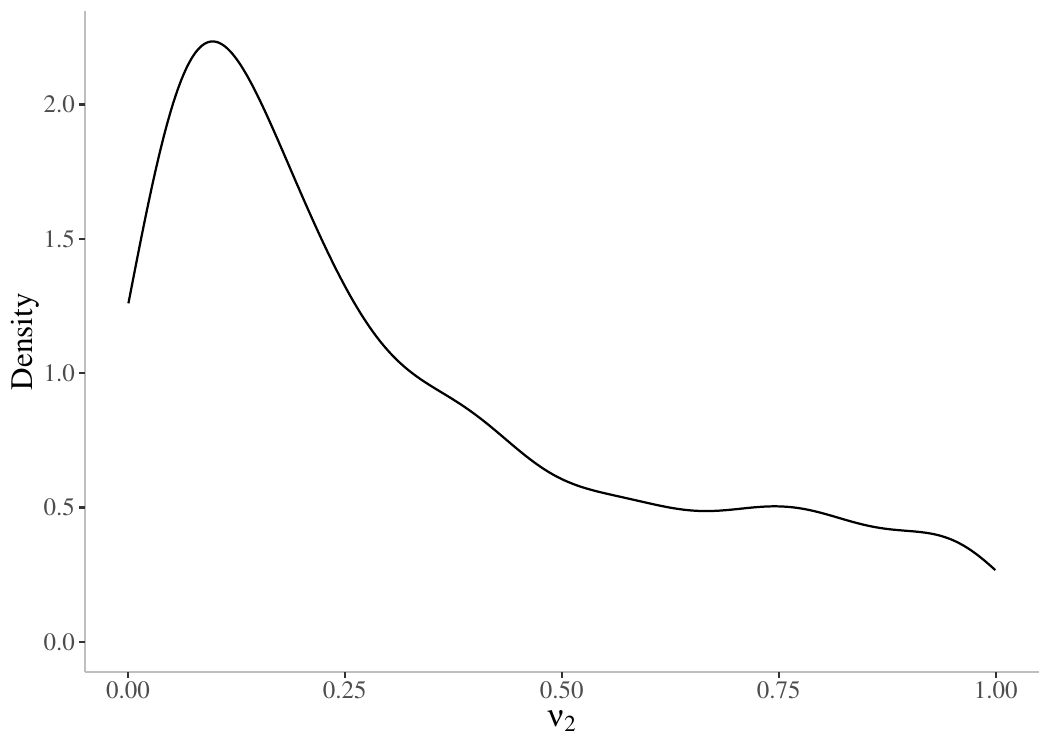}
         \caption{CSCN ($\nu_2$)}
     \end{subfigure}
        \caption{Posterior distribution of $\nu$ for CST and CSS and $\nu_1$ and $\nu_2$ for the CSCN}
        \label{posterior_nu}
\end{figure}

In Table \ref{posterior_estimates} we have the estimates of all fitted models. For all models, we can see the same signs for $\bm{\beta}$. In particular, if we compare the CSS that presented the best QQ plot with the CSN we see that the intercept has a greater magnitude in the CSS, the effect of $\beta_{21}$ is almost null in the CSN while in the CSS it is not, and the skewness parameter value in the CSS is bigger than in the CSN. Interpreting the parameters we have that male has more chance of heart disease than female, asymptomatic patients has more chance of heart disease, resting blood pressure on admission to the hospital has a positive effect on the heart disease chance, the slope of the peak exercise ST segment has more chance of heart disease, the number of major vessels has a positive effect on the heart disease, and the patients with thallium heart scan results reversable defect has more chance of heart disease.

\begin{table}[h!]
	\centering
	\caption{Posterior parameter estimates and HPD for the parameters of the fitted models.}
 \label{posterior_estimates}
	\resizebox{\textwidth}{!}{\begin{tabular}{lcccccccccccccc}
	\hline
        Model & $\beta_0$ & $\beta_1$ & $\beta_{21}$ & $\beta_{22}$ & $\beta_{23}$ & $\beta_3$ & $\beta_{41}$ & $\beta_{42}$ & $\beta_5$ & $\beta_{61}$ & $\beta_{62}$ & $\delta$ & $\nu (\nu_1)$ & $\nu_2$\\
        \hline
        Probit & -1.60 & 0.61 & $<0.01$ & -0.28 & 0.92 & 0.22 & 0.79 & 0.50 & 0.61 & 0.13 & 0.82 & - & - & -\\
        & (-2.28,-0.96) & (0.16,1.03) & (-0.65,0.68) & (-0.83,0.33) & (0.34,1.49) & (0.04,0.42) & (0.40,1.21) & (-0.23,1.17) & (0.39,0.82) & (-0.55,0.84) & (0.37,1.21) & - & - & - \\
        CSN & -1.56 & 0.60 & $<0.01$ & -0.26 & 0.91 & 0.22 & 0.77 & 0.45 & 0.63 & 0.11 & 0.81 & 0.90 & - & -\\
        & (-2.25,-0.93) & (0.19,1.05) & (-0.56,0.70) & (-0.80,0.33) & (0.40,1.49) & (0.02,0.41) & (0.39,1.19) & (-0.18,1.19) & (0.42,0.85) & (-0.56,0.84) & (0.43,1.27) & (0.07,0.97) & - & -\\
        CST & -2.16 & 0.77 & 0.16 & -0.35 & 1.27 & 0.28 & 1.17 & 0.71 & 0.96 & 0.14 & 1.16 & 0.91 & 2.08 & - \\
        & (-3.43,-1.02) & (0.14,1.45) & (-0.91,1.08) & (-1.10,0.55) & (0.47,2.17) & (0.03,0.53) & (0.50,1.93) & (-0.18,1.67) & (0.55,1.50) & (-0.83,1.07) & (0.61,1.89) & (0.08,0.97) & (2.00,10.35)\\
        CSS & -2.26 & 0.82 & 0.11 & -0.38 & 1.32 & 0.29 & 1.19 & 0.76 & 1.00 & 0.12 & 1.21 & 0.98 & 1.09 \\
        & (-3.74,-0.98) & (0.18,1.58) & (-0.91,1.13) & (-1.30,0.46) & (0.48,2.47) & (-0.02,0.60) & (0.51,2.18) & (-0.25,1.92) & (0.53,1.64) & (-0.91,1.17) & (0.53,2.03) & (0.07,0.99) & (1.00,12.75)\\
        CSCN & -2.17 & 0.80 & 0.09 & -0.38 & 1.31 & 0.29 & 1.16 & 0.73 & 0.97 & 0.17 & 1.17 & 0.89 & 0.22 & 0.06\\
        & (-4.88,-0.96) & (0.09,1.94) & (-0.96,1.39) & (-1.40,0.53) & (0.30,2.95) & (0,0.64) & (0.40,2.69) & (-0.15,2.33) & (0.44,2.10) & (-1.04,1.33) & (0.42,2.59) & (0.09,0.99) & (0,0.87) & (0,0.91)\\
        \hline
	\end{tabular}}
\end{table}

In the Figure \ref{prob_success} we have the probability of heart disease as a function of $\eta_i$ for all models, we can see that the curves of probability of distributions with scale mixtures have different behaviour in comparison with CSN, which can cause different interpretations among the fitted models, also, for CST, CSS and CSCN link functions we can see a high probability rate for $\eta_i > 0$.

\begin{figure}[h!]
     \centering
     \begin{subfigure}[b]{0.49\textwidth}
         \centering
         \includegraphics[width=\textwidth]{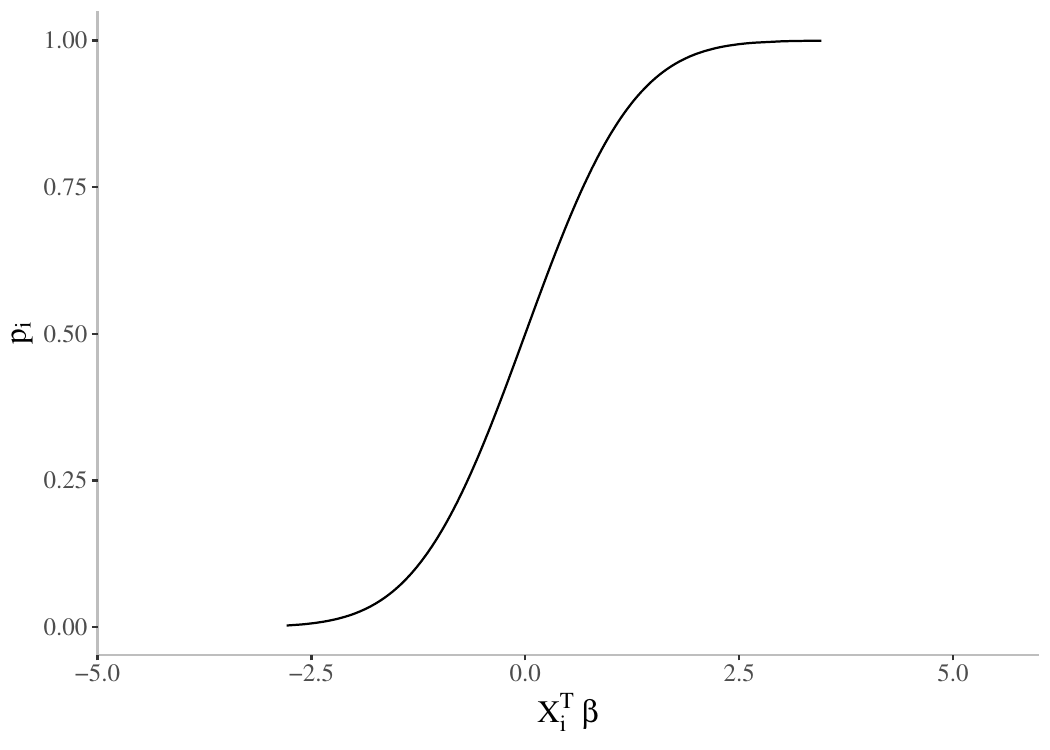}
         \caption{Probit}
     \end{subfigure}
     \begin{subfigure}[b]{0.49\textwidth}
         \centering
         \includegraphics[width=\textwidth]{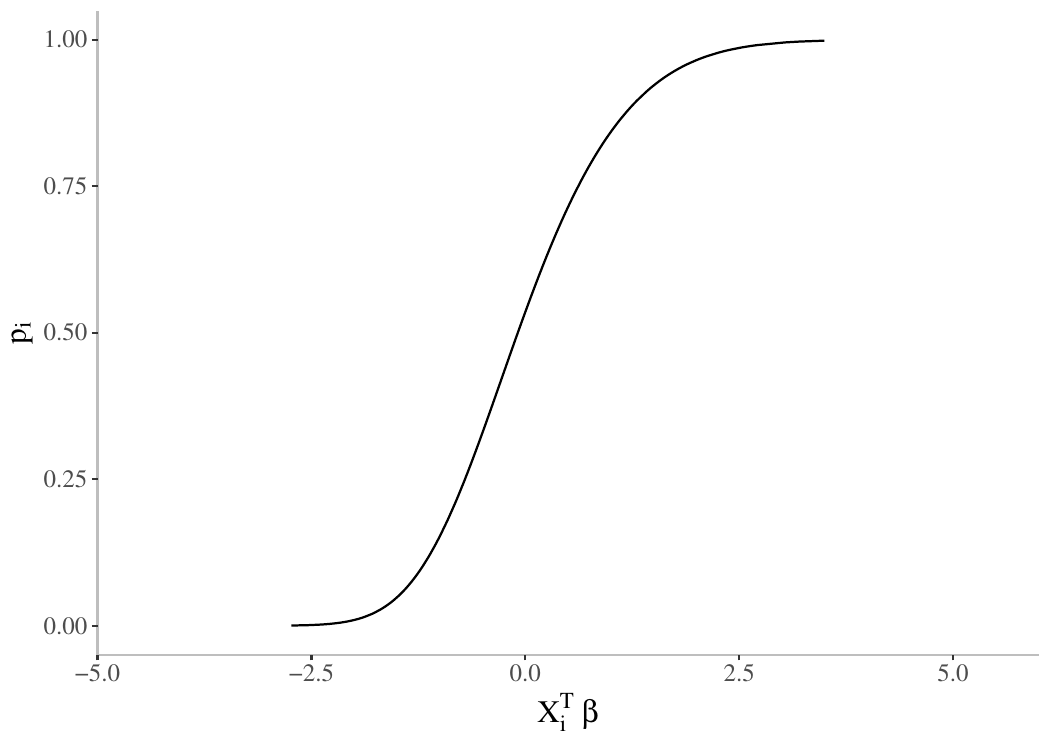}
         \caption{CSN}
     \end{subfigure}
     \begin{subfigure}[b]{0.49\textwidth}
         \centering
         \includegraphics[width=\textwidth]{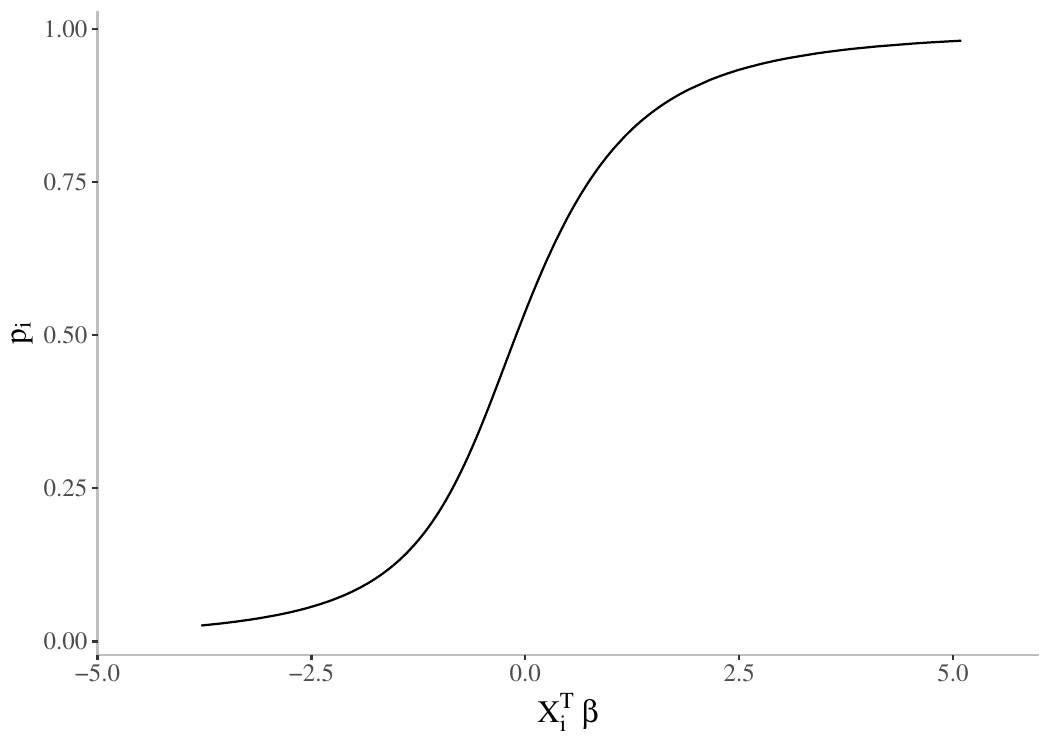}
         \caption{CST}
     \end{subfigure}
     \begin{subfigure}[b]{0.49\textwidth}
         \centering
         \includegraphics[width=\textwidth]{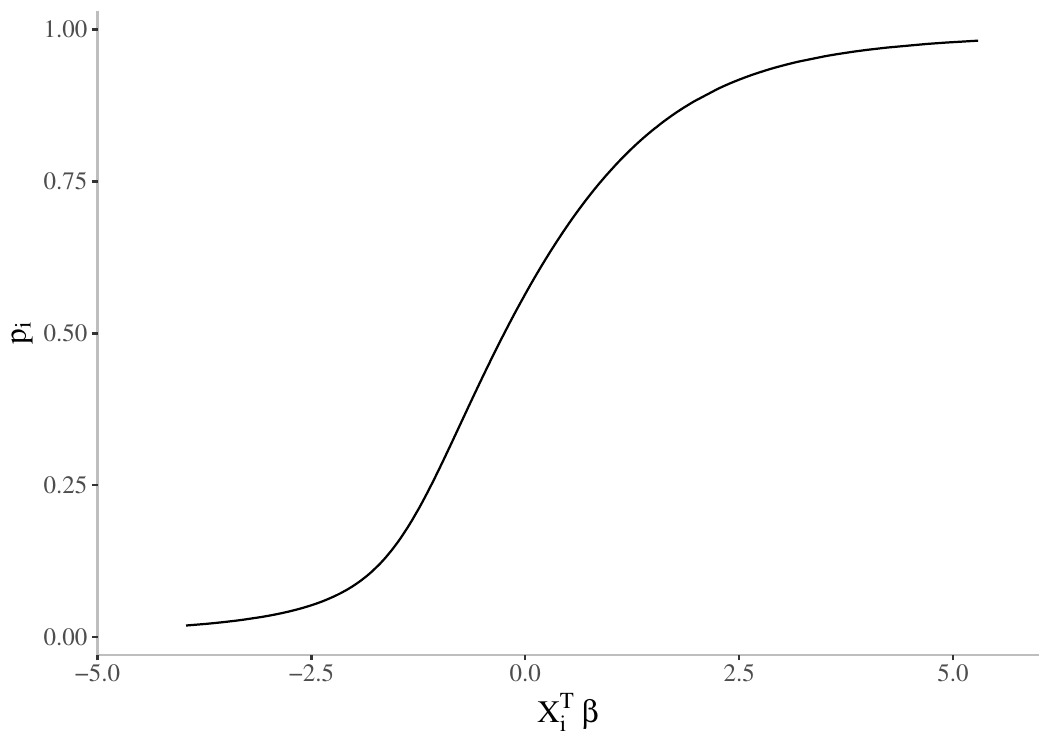}
         \caption{CSS}
     \end{subfigure}
     \begin{subfigure}[b]{0.49\textwidth}
         \centering
         \includegraphics[width=\textwidth]{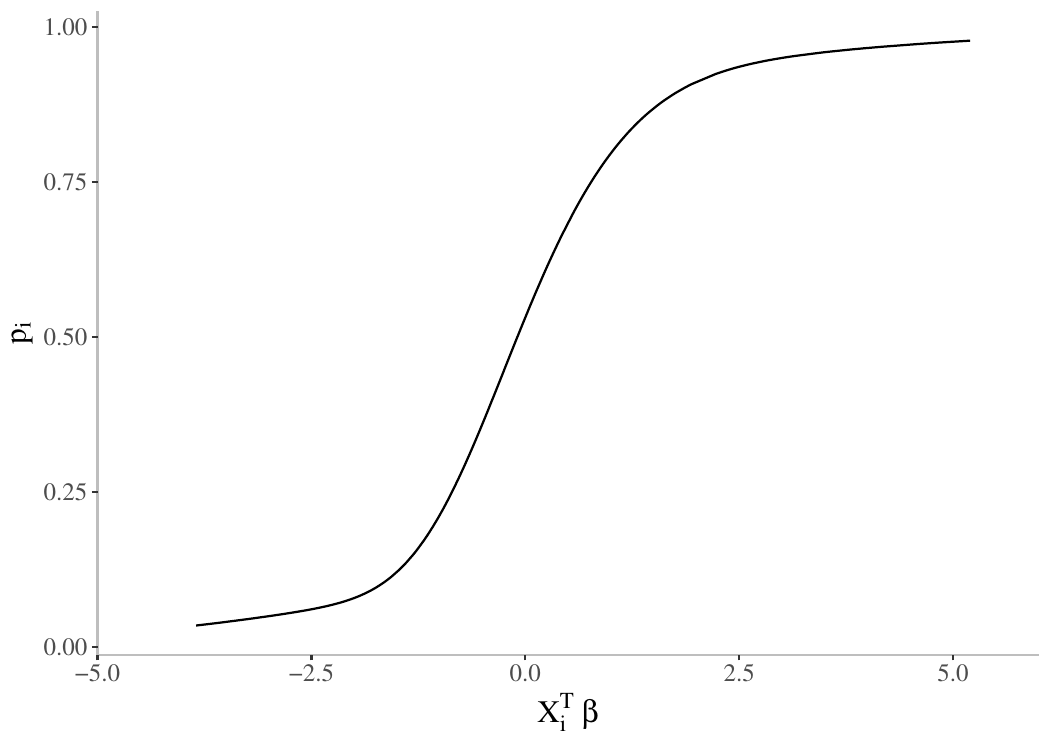}
         \caption{CSCN}
     \end{subfigure}
        \caption{Probability of heart disease as a function of $\eta_i$ for the fitted models.}
        \label{prob_success}
\end{figure}

\section{Conclusions}
In this paper we proposed a new class of link functions based on the SMCSN distributions. This class of link functions include symmetrical, skewed and robust link functions. We performed Bayesian estimation using latent variables to described the binary model. A strategy to select and fix the sign of the skewness was showed to avoid identifiability problems. Residual analysis was presented, and simulation studies were performed evaluating parameter recovery. The simulation study showed initially some problems in the accuracy of the estimates of $\bm{\beta}$, then we proposed to use a hyper-g prior to reduce the bias. Also, we noticed that for all sample size established, the estimates of all parameter tend to be closer to real values. An application was made on the heart disease data that indicated that the skewed and heavy-tailed link functions were preferred to the usual probit and CSN link models.

\section*{Acknowledgements}
    The authors wish to thank the Fundação de Amparo à Pesquisa do Estado de São Paulo (FAPESP, grant number 2015/25867-2) for the financial support and the partial financial support of CNPq.

\bibliographystyle{elsarticle-harv}

\bibliography{references}

\end{document}